\documentclass[sigconf,tikz]{acmart}

\AtBeginDocument{%
  }

\setcopyright{acmcopyright}
\usepackage[english]{babel}
\usepackage[T1]{fontenc}
\usepackage{tikz-uml}

\usetikzlibrary{decorations.pathreplacing,calligraphy}
\tikzstyle{note}=[rectangle,fill,fill=white]
\newcommand{\mathindent}{{\; \; \; \;}}


\usepackage{enumitem}
\usepackage{tikz}
\usepackage{amsmath}
\usepackage{multirow}
\usepackage{amsthm}
\usepackage{mathtools}

\usepackage{amsmath,amsfonts}
\usepackage{algorithmic}
\usepackage{graphicx}
\usepackage[noline,ruled,vlined]{algorithm2e}
\usepackage{subfig}
\usepackage{fontawesome}
\usepackage{pifont}
\usepackage{xurl}

\usepackage[usestackEOL]{stackengine}
\usepackage{colortbl}
\usepackage{array}
\usepackage{pifont}
\usepackage{adjustbox}

\usepackage{acronym}
\acrodef{BF}[BF]{Bloom Filter}
\acrodef{CA}[CA]{Certificate Authority}
\acrodef{TEE}[TEE]{Trust Execution Environment}
\acrodef{CRL}[CRL]{Certificate Revocation List}
\acrodef{PM}[PM]{Pseudonym Manager}
\acrodef{PKI}[PKI]{Public Key Infrastructure}
\acrodef{BU}[BU]{Backward Unlinkability}
\acrodef{VLR}[VLR]{Verifier Local Revocation}
\acrodef{GS}[GS]{Group Signatures}
\acrodef{IBE}[IBE]{Identity Based Encryption}

\newtheorem{assumption}{Assumption}
\newtheorem{theorem}{Theorem}
\newtheorem{observation}{Observation}

\newtheorem{definition}{Definition}
\newtheorem{correctness-argument}{Argument}
\newtheorem{lemma}{Lemma}
\newenvironment{basis}{\noindent\textit{Basis:}}{\hfill$\square$}

\newcommand{\cmark}{ \rotatebox[origin=c]{-90}{ \color{teal}{\ding{51}}} }
\newcommand{\xmark}{ \rotatebox[origin=c]{-90}{ \color{red}{\ding{55}}} }%
\newcommand{\ymark}{ \rotatebox[origin=c]{-90}{ \textcolor[HTML]{FFA700}{\ding{51}}} }%
\newcommand{\tableRotFix}[1]{ \rotatebox[origin=c]{-90}{#1} } 
\newcommand\Mycomb[2][^n]{\prescript{#1\mkern-0.5mu}{}C_{#2}}

\newcommand{\fakedescription}[1]{\medskip\noindent\textit{\textbf{#1}}}

\newcommand{\thesystem}{RRP\xspace}
\newcommand{\theedgesystem}{EDGAR\xspace}
\newcommand{\theabstraction}{Range-Revocable Pseudonyms\xspace}
\newcommand{\theabstractionshort}{RRPs\xspace}
\newcommand{\theabstractionshortsingle}{RRP\xspace}

\hyphenation{pseu-do-nym pseu-do-nyms}

\begin{document}

\title[Using Range-Revocable Pseudonyms to Provide Backward Unlinkability in the Edge (Extended Version)]{Using Range-Revocable Pseudonyms to Provide Backward Unlinkability in the Edge (Extended Version)*}


\author{Cláudio Correia, Miguel Correia, Luís Rodrigues}

\affiliation{\institution{INESC-ID, Instituto Superior Técnico, Universidade de Lisboa} \country{Portugal}}

\email{{claudio.correia, miguel.p.correia, ler}@tecnico.ulisboa.pt}



\begin{CCSXML}
<ccs2012>
   <concept>
       <concept_id>10002978.10002991.10002993</concept_id>
       <concept_desc>Security and privacy~Access control</concept_desc>
       <concept_significance>500</concept_significance>
       </concept>
   <concept>
       <concept_id>10002978.10002991.10002994</concept_id>
       <concept_desc>Security and privacy~Pseudonymity, anonymity and untraceability</concept_desc>
       <concept_significance>500</concept_significance>
       </concept>
   <concept>
       <concept_id>10002978.10003006.10003013</concept_id>
       <concept_desc>Security and privacy~Distributed systems security</concept_desc>
       <concept_significance>300</concept_significance>
       </concept>
   <concept>
       <concept_id>10002978.10003006.10003007.10003009</concept_id>
       <concept_desc>Security and privacy~Trusted computing</concept_desc>
       <concept_significance>300</concept_significance>
       </concept>
 </ccs2012>
\end{CCSXML}

\ccsdesc[500]{Security and privacy~Access control}
\ccsdesc[500]{Security and privacy~Distributed systems security}
\ccsdesc[400]{Security and privacy~Pseudonymity, anonymity and untraceability}
\ccsdesc[100]{Security and privacy~Trusted computing}



\pagestyle{plain}

\begin{abstract}
In this paper we propose a novel abstraction that we have named \theabstraction (\theabstractionshort). \theabstractionshort are a new class of pseudonyms whose validity can be revoked for any time-range within its original validity period. The key feature of \theabstractionshort is that the information provided to revoke a pseudonym for a given time-range cannot be linked with the information provided when using the pseudonym outside the revoked range. We provide an algorithm to implement \theabstractionshort using efficient cryptographic primitives where the space complexity of the pseudonym is constant, regardless of the granularity of the revocation range, and the space complexity of the revocation information only grows logarithmically with the granularity; this makes the use of \theabstractionshort far more efficient than the use of many short-lived pseudonyms.  We have used \theabstractionshort to design \theedgesystem, an access control system for VANET scenarios that offers backward unlinkability. The experimental evaluation of \theedgesystem shows that, when using \theabstractionshort, the revocation can be performed efficiently (even when using time slots as small as 1 second) and that users can authenticate with low latency ($0.5-3.5$ \textit{ms}).
\end{abstract}

\keywords{privacy, verifier local revocation, backward unlinkability}

\maketitle

\tableofcontents

\section{Introduction}
\label{sec:intro}

Anonymous authentication offers both accountability and privacy, protecting clients from curious application providers while ensuring that only authorized participants are able to use the application~\cite{rahaman2017provably,ishida2018fully,khodaei2018efficient,sun2010efficient}. The number and relevance of applications that require anonymous authentication are increasing. Examples include crowdsensing~\cite{ganti2011mobile,pan2013crowd,ni2017security,bastos2018signature} and Vehicular Networks (VANETs)~\cite{remeli2019automatic,ganan2015epa,mixzone}, where clients voluntarily share information about their environment for the common good. Client authentication is a crucial mechanism to provide accountability for malicious and erroneous activity, ensuring the reliability of these applications~\cite{popa2011privacy,ganti2011mobile}. Unfortunately, authentication can compromise \emph{user privacy}, as it may be associated with sensitive information, such as location~\cite{pan2013crowd}. This is exacerbated by the fact that, in most of these applications, clients are mobile and may need to \emph{authenticate frequently}, e.g., when they move to the range of a different base station or cell. Multiple authentications may be linked to extract additional information such as daily routines~\cite{ganti2011mobile} or health status~\cite{lin2012bewell} for financial gain~\cite{GDPR,christin2016privacy,lauinger2012privacy,at_t,gao_17_656,monetising_car_data}.  Anonymous authentication can be achieved using \ac{GS} schemes~\cite{chaum1991group,boneh2004group,bastos2018signature} or pseudonym certificates~\cite{lysyanskaya1999pseudonym,chaum1985security}. 

A challenging task in this context is to support \emph{revocation} without violating privacy. Revocation aims to prevent some clients from further authenticating in the system. Client revocation may be required in the event of credential misuse, sensor malfunctioning, change in client privileges, stolen secret keys, or when a client leaves voluntarily. Client revocation can be implemented in different ways. We distinguish two main classes of revocation strategies, namely, \emph{global client revocation} and \emph{verifier local revocation}.

Strategies based on global client revocation require all clients to obtain new credentials (or update their credentials) every time a single client is revoked. Examples of this strategy include Ateniese \textit{et al.}~\cite{ateniese2002quasi} (where the group public key is renewed at each revocation) and Ohara \textit{et al.}~\cite{ohara2019shortening} (where a small public membership message is broadcast at each revocation). These approaches make revocation very onerous in scenarios with many clients (e.g., consider vehicle numbers in VANETS) and impractical in mobile settings, where clients may become temporarily disconnected from the network. 

Strategies based on \ac{VLR}~\cite{boneh2004group,bringer2011backward} do not require that all clients are contacted when a given client is revoked. Instead, only the nodes that perform authentication (often called the signature \emph{verifiers}) have to be informed about the revoked clients~\cite{Vtoken,vespa,khodaei2018secmace,sun2010efficient,ishida2018fully,bastos2018signature}. In systems that use pseudonyms, this involves sending to the verifiers a \ac{CRL} with the pseudonyms of the revoked client. In systems based on group signatures, this involves sending a cryptographic token that can be used to trace the digital signatures of the revoked client.

A problem with both approaches is that, if one or more credentials have been used before revocation, an attacker can cross-check the information used for revocation with the information collected when those pseudonyms were used to break the privacy of the client. Ideally, client revocation should not allow linking credentials that have been used prior to the revocation, a property known as \emph{\ac{BU}}~\cite{haas2011efficient,khodaei2018efficient,ishida2018fully,nakanishi2005verifier}. Previous strategies to provide BU assign credentials that are valid only during a given time slot of a certain duration~\cite{haas2011efficient,khodaei2018efficient,rahaman2017provably,bastos2018signature}. Then, when a client is revoked, only the credentials for future time slots are revoked and no information is disclosed regarding credentials used prior to revocation. One can divide these recent strategies as \ac{GS} with time-bounded keys~\cite{chu2012verifier,emura2017group} or pseudonyms with time slots~\cite{haas2011efficient,khodaei2018efficient}. However, these schemes require the use of revocation lists whose size grows linearly with the number of time slots which, in practice, preclude the use of fine-grain time slots. 

Revoking only the credentials for future time avoids backward linkability but, unfortunately, if time slots are large, it may be unacceptable to let revoked clients continue accessing resources until the current slot expires. For this reason, many systems immediately revoke the credentials for the current time slot, at the expense of exposing the client's privacy during that period~\cite{haas2011efficient,khodaei2018efficient,ishida2018fully,nakanishi2005verifier}. Our new class of pseudonyms supports efficient revocation even when fine-grain time slots are used, avoiding this dilemma.

This paper contributes to designing anonymous authentication systems for edge computing scenarios, like VANETs, that offer revocation and \ac{BU}. We make two key contributions in this context:

\fakedescription{Contribution 1:} We propose a novel abstraction named \theabstraction (\theabstractionshort). \theabstractionshort are pseudonyms that can be revoked for any time-range within their original validity period. Clients hold a number of RRPs that is proportional to the number of authentication actions they need to perform, regardless of the granularity of the linkability window. Each RRP should be used at most once during its lifetime. In runtime, the RRP can be used to generate a capability that is only valid for the specific time slot where the RRP is being used. The key feature of \theabstractionshort is that the information provided to revoke a pseudonym for a given time-range cannot be linked with the information provided when using the pseudonym outside the revoked range. In particular, if a pseudonym is revoked at some point in time, it is impossible for an attacker to find out if that pseudonym has been used before that time. We provide an algorithm to implement \theabstractionshort where the space complexity of the pseudonym is constant, regardless of the granularity of the revocation range, and the space complexity of the revocation information only grows logarithmically with the granularity; this supports the use of fine-grain slots and makes the use of \theabstractionshort far more efficient than the use of many short-lived pseudonyms. We show that \theabstractionshort can be used to solve efficiently the \ac{BU} problem for anonymous authentication.

\fakedescription{Contribution 2:}  We propose an access control system for VANET scenarios~\cite{remeli2019automatic,ganan2015epa,mixzone} that uses \theabstractionshort to offer \ac{BU}. Our access control system, named \theedgesystem, illustrates how one can leverage \theabstractionshort to enforce authentication and revocation. In \theedgesystem, we deploy \ac{PM} servers that run on the edge of the network, serving clients with new \theabstractionshort. Since a \ac{PM} server holds sensitive information, and the edge infrastructure in VANETS is known to be exposed to attacks~\cite{Data_Security_and_Privacy_survey_2018,Security_Privacy_Challenges_2017}, we have designed the \ac{PM} server to be executed with the support of a \ac{TEE}, such as Intel SGX enclaves~\cite{mckeen:13,correia2020omega,nguyen2020duetsgx}. This allows the server to provide new \theabstractionshort to clients without disclosing their identities, even if the untrusted environment is compromised.\medskip

\theedgesystem paves the way for arbitrarily small time slots with minimal overhead. Consider a client who during the day goes to the hospital and before that to a nearby shop. When using \theedgesystem, clients only need a number of pseudonyms proportional to the number of resources they need to access (in this example, 2 resources), and not proportional to the granularity of the time slots. With previous work, if the two events above could occur within 20 minutes of each other, a client would require 72 pseudonyms; if the events could occur within 5 minutes of each other, previous works could require 288 pseudonyms. Also, with \theabstractionshort the cost of revocation is logarithmic with granularity: only 12 credentials would need to be revoked with a 20 minute granularity and only 16 credentials would need to be revoked with a 5 minute granularity.

Our prototype uses SGX-enabled Intel NUC (Next Unit of Computing) nodes that can be configured to serve as verifiers or a \ac{PM} server. We measured the latency experienced by clients. Our results show that \theedgesystem can authenticate clients with low latency, in the order of $0.5-3.5$~\textit{ms}, which satisfies the requirements of most latency-sensitive applications, such as augmented reality~\cite{mangiante2017vr,schmoll2018demonstration,satyanarayanan2017emergence} and safety applications~\cite{karagiannis2011vehicular,mec,Emergency_Vehicle}.  We also compared \theedgesystem with a related work~\cite{haas2011efficient} using a real data set of vehicle traces. We show that \theedgesystem achieves multiple orders of magnitude in storage savings and that the revocation can be performed efficiently, even when using time slots as small as 1 second.

\section{Related Work}
\label{sec:background}

\fakedescription{Backward Unlinkability} has been defined in previous work~\cite{haas2011efficient,khodaei2018efficient,ishida2018fully,nakanishi2005verifier} as follows: \textit{when a revocation occurs, the signatures produced by the client before the revocation interval remain anonymous.} The notion of unlinkability captures the inability of an adversarial server to link a revocation phase of the protocol to any individual signing phase. We are interested in non-blocking approaches such as \ac{VLR}~\cite{boneh2004group,bringer2011backward}, where the credential of non-revoked clients remain valid, and only the verifiers need to be informed about the credentials of revoked clients. As discussed next, the most popular anonymous authentication schemes that offer \ac{BU} are based on \ac{GS} or pseudonyms. In both cases, solutions typically consist of assigning different credentials to different time intervals and then revoking only the credentials for future intervals. Unfortunately, in these previous works, the cost of revocation grows linearly with the granularity of the intervals. Note that if intervals are large, it may be unacceptable to wait for the next interval to revoke the credentials: in this case, it may also be necessary to revoke the credentials for the current interval. Unfortunately, this makes all the credentials used in the current interval vulnerable to being linked.

\fakedescription{Anonymous Blacklisting} is a term used to describe techniques that are able to safeguard the privacy of revoked clients.  Techniques to ensure this goal include the use of pseudonyms~\cite{tsang2009nymble}, group signatures~\cite{slamanig2016linking}, accumulators~\cite{au2008perea}, and zero-knowledge proofs (ZKPs)~\cite{tsang2007blacklistable}. Most systems that aim to offer anonymous blacklisting also aim at offering BU~\cite{henry2011formalizing}, but either use computationally expensive cryptographic operations or also incur a cost that is linear with the granularity of the linkability window. For example, BLAC~\cite{tsang2007blacklistable} relies on inherently computationally expensive ZKPs~\cite{henry2011formalizing}. We avoid the use of ZKPs to implement RRPs due to their high cost; instead, we explore more efficient approaches.

\fakedescription{Accumulators, Symmetric Keys, and IBE:} Cryptographic accumulators~\cite{camenisch2002dynamic} may be vulnerable to linkability~\cite{slamanig2016linking}, i.e., the previously performed operations become linkable when a user is revoked. These solutions also lack VLR since clients need to update their witness at each revocation~\cite{au2008perea,camacho2010impossibility}. Credentials based on symmetric keys~\cite{mixzone} require a high level of trust in the verifier and are susceptible to identity theft if the verifier is compromised~\cite{Octopus}. Nymble~\cite{tsang2009nymble} achieves \ac{BU} but requires a central manager to share a symmetric key with every verifier. Credentials based on \ac{IBE}~\cite{BonehIBE} do not provide anonymity (as they consider the user's identity as the public key) and incur considerable overhead due to expensive cryptographic operations.

\fakedescription{Group Signatures with Time-Bound Keys:} 
\ac{GS}~\cite{chaum1991group} allow different signatures produced by different group members to be verified using a common group public key, achieving anonymity in the set formed by the group members. \ac{GS} schemes have been augmented with mechanisms to support VLR, as suggested by Brickell~\cite{brickell2003efficient} and formalized by Boneh and Shacham~\cite{boneh2004group}. Unfortunately, in this scheme, the revocation is performed by publishing a cryptographic token that links all the signatures produced from a revoked member, compromising the anonymity of signatures produced before the revocation. Nakanishi and Funabiki~\cite{nakanishi2005verifier} extend~\cite{boneh2004group} to offer \ac{BU} while preserving VLR. Their approach divides the time into slots and locks a different secret key for each slot, revoking only the keys for current and future slots. Chu \textit{et al.}~\cite{chu2012verifier} introduce the notion of Time-Bound Keys (TBK) by setting a configurable expiration date in each key, improving the revocation performance in VLR-GS schemes. In recent years, different solutions have been proposed, following a similar path while aiming to reduce the revocation cost and complexity. LBR~\cite{slamanig2016linking} requires a trusted online manager to check revocation. Rahaman \textit{et al.}~\cite{rahaman2017provably} embed pseudoIDs in private key parameters and ties the pseudoID to an epoch, improving the revocation check complexity to $\log(R)$, where $R$ is the size of the revocation list. Emura \textit{et al.}~\cite{emura2017group} propose an efficient solution with a constant signing cost, but clients are required to download expiration information at each time slot. In Sucasas \textit{et al.}~\cite{bastos2018signature}, the authors also achieve \ac{BU}, yet, their solution prevents clients from participating in the same task several times. Ishida \textit{et al.}~\cite{ishida2018fully} leverage a mixture of \ac{IBE} with \ac{GS}, generating \ac{IBE} private keys locked to time slots. However, their revocation is still based on the \ac{GS} private key, also having $\mathcal{O}(R~T)$ (where $T$ is the number of time slots). Despite the interesting properties of \ac{GS} schemes, \ac{GS} solutions are usually complex and some may require heavy cryptography operations (such as ZKPs), resulting in few implementations and deployments in real-world scenarios.

\fakedescription{Pseudonyms with Bounded Time Slots:} 
Client anonymity can also be achieved by using pseudonyms. These can be implemented using a \ac{PKI},  where clients maintain multiple keys to represent pseudonyms~\cite{mixzone,Vtoken,rigazzi2017optimized}. Pseudonym-based solutions also struggle to offer \ac{BU}. Some solutions invalidate global information, forcing clients to renew credentials at each revocation~\cite{camenisch2002dynamic,ganan2015epa}, failing to preserve VLR. V-token~\cite{Vtoken}, IFAL~\cite{ifalV2X} and PRESERVE~\cite{ETSI_103} follow the C2C-CC standard~\cite{ETSI_103}, revoking only the long-term vehicles certificates and letting the pseudonyms expire, also failing the VLR. PrivacyPass~\cite{davidson2018privacy} is an anonymous authentication scheme implemented in Cloudflare CDNs, unfortunately, no revocation technique is presented. PUCA~\cite{forster2014puca} requires the owner of the pseudonym to trigger revocation, letting a misbehaving entity evade revocation. The most common solution is to publish all pseudonyms of the revoked client in a \ac{CRL}~\cite{haas2011efficient,khodaei2018efficient,whyte2013security,sun2010efficient}, respecting VRL but failing \ac{BU}.
The challenge of maintaining the unlinkability of pseudonyms after revocation was first addressed by Haas \textit{et al.}~\cite{haas2011efficient}, followed by Khodaei \textit{et al.}~\cite{khodaei2018efficient}, and implemented in SCMS POC~\cite{whyte2013security} and CAMP~\cite{camp2016security} pilots, supported by Volkswagen, Mazda, and Nissan. These solutions associate pseudonyms with time intervals and revoke only pseudonyms of the current and future intervals. However, all interactions in the current slot can still be linked and an adversary can use the revocation information to break anonymity~\cite{haas2011efficient}.

\fakedescription{Privacy at the Edge:} Edge infrastructures~\cite{mec}, supported by numerous \emph{fog nodes}~\cite{Fog_computing,Cisco,bonomi2012fog}, enable computation near clients. Local authentication within edge resources is vital to meet latency requirements and ensure availability. Privacy is a major concern in VANETs, where vehicles generate and transmit substantial amounts of data. Private companies~\cite{gao_17_656,monetising_car_data,hum_verizon} are exploring ways to monetize user data, often at the expense of privacy, with estimates projecting a worldwide market value of \$750 billion by 2030~\cite{gao_17_656,monetising_car_data}. For example, unethical edge providers~\cite{Tech_Monitor,at_t} may sell user data to insurance companies, that can subsequently tailor insurance plans based on individual driving habits~\cite{NHTSA_v2v}. Car-sharing and rental agencies can exploit user data with the same purpose~\cite{fipa}. An attacker could also gain access to user data in the edge infrastructure~\cite{dark_reading,verizon_leak,china_attack,Info_Security_2,Info_Security,edge_news}, inferring if the certain individual is out of the household or has been attending the hospital~\cite{markey2015tracking,gao_17_656,gao_15_775,monetising_car_data,otonomo,NHTSA_privacy,personal_data}.

To mitigate these problem, pseudonyms are recommended in the GDPR~\cite{GDPR2} and by ETSI~\cite{ETSI_102}, and are a standard practice in various connected vehicle pilot programs of major car manufacturers (ETSI~\cite{ETSI_104}, IEEE~\cite{tls_its}, NHTSA~\cite{NHTSA_v2v}) such as CAMP~\cite{CAMP}, New York City~\cite{its_pilot}, and Canada~\cite{FHWA} pilots. According to a study by ETSI on the use of pseudonyms~\cite{ETSI_103}, frequent pseudonym changes enhance privacy: ``\textit{the more often an ITS-S\footnote{Intelligent Transport Systems (ITS) refer to network components, including the On-Board Equipment (OBE) of a vehicle.} changes its pseudonym, the higher its privacy}". However, revoking access rights for a client using different pseudonyms can compromise anonymity when an adversary leverages the revocation information to link the pseudonyms~\cite{haas2011efficient}. Approaches that mitigate backward linkability by associating pseudonyms with time slots result in increased storage requirements for pseudonyms at the client side and, consequently, in the \ac{CRL}~\cite{haas2011efficient}.

\begin{table}[t]\Huge
      \centering
    \resizebox{1.02\columnwidth}{!}{%
{\fontsize{50}{60}\selectfont
    
  \begin{adjustbox}{angle=90}

  \begingroup
    \renewcommand*{\arraystretch}{1.5}%
    \definecolor{tabred}{RGB}{230,36,0}%
    \newcommand{\myTableRotationAngle}{50}
    \newcommand*{\redtriangle}{\textcolor{tabred}{ TEST }}
    \newcommand*{\headformat}[1]{{{\fontsize{50}{0}\selectfont #1 }}}%
    \newcommand*{\vcorr}{%
      \vadjust{\vspace{-\dp\csname @arstrutbox\endcsname}}%
      \global\let\vcorr\relax
    }%
    \newcommand*{\HeadAux}[1]{%
      \multicolumn{1}{ @{} c @{} }{%
        \vcorr
        \sbox0{\headformat{\strut #1}}%
        \sbox2{\headformat{\textbf{\theabstractionshort/\theedgesystem (Ours)}}}%
        \sbox4{9pt}%
        \sbox6{\rotatebox{\myTableRotationAngle}{\rule{0pt}{\dimexpr\ht0+\dp0\relax}}}%
        \sbox0{\raisebox{.1\dimexpr\dp0-\ht0\relax}[0pt][0pt]{\unhcopy0}}%
        \kern.75\wd4 %
        \rlap{%
          \raisebox{.1\wd4}{\rotatebox{\myTableRotationAngle}{\unhcopy0}}%
        }%
        \kern.25\wd4 %
        \ifx\HeadLine Y%
          \dimen0=\dimexpr\wd2+.5\wd4\relax
          \rlap{\rotatebox{\myTableRotationAngle}{\hbox{\vrule width\dimen0 height .4pt}}}%
        \fi
      }%
    }%
    \newcommand*{\head}[1]{\HeadAux{\global\let\HeadLine=Y#1}}%
    \newcommand*{\headNoLine}[1]{\HeadAux{\global\let\HeadLine=N#1}}%
    \noindent

    \newcommand{\myquad}[1][1]{\hspace*{#1em}\ignorespaces}
    

    \begin{tabular}{%
      c  c c|
      c| c| c| c| c| c| c| c| c| c|      
      c| c| c| c| c| c| c| c| c| c|
      c| c| c|
      c| c| c|
      c| c| c|
      c
      >{\bfseries}l%
    }%
      
      &&\head{}&

      \head{ \rotatebox[origin=c]{180}{ \textbf{\theabstractionshort/\theedgesystem (Ours)}  } } &
      \head{ \rotatebox[origin=c]{180}{ Haas \textit{et al.}~\cite{haas2011efficient}  } } &
      \head{ \rotatebox[origin=c]{180}{ Khodaei \textit{et al.}~\cite{khodaei2018efficient}  } } &
      \head{ \rotatebox[origin=c]{180}{ SCMS POC~\cite{whyte2013security}  } } &
      \head{ \rotatebox[origin=c]{180}{ PASS~\cite{sun2010efficient}   } } &
      \head{ \rotatebox[origin=c]{180}{ V-token~\cite{Vtoken}  } } &      
      \head{ \rotatebox[origin=c]{180}{ IFAL~\cite{ifalV2X}  } } &    
      \head{ \rotatebox[origin=c]{180}{ PRESERVE~\cite{ETSI_103} } } &
      \head{ \rotatebox[origin=c]{180}{ Rigazzi \textit{et al.}~\cite{rigazzi2017optimized} } }
      
      &\head{}&
      
      \head{ \rotatebox[origin=c]{180}{  Ishida \textit{et al.}~\cite{ishida2018fully}  } } &
      \head{ \rotatebox[origin=c]{180}{  Nakanishi \textit{et al.}~\cite{nakanishi2005verifier}  } } &
      \head{ \rotatebox[origin=c]{180}{  Rahaman \textit{et al.}~\cite{rahaman2017provably}  } } &
      \head{ \rotatebox[origin=c]{180}{  Sucasas \textit{et al.}~\cite{bastos2018signature}  } } &
      \head{ \rotatebox[origin=c]{180}{  Emura \textit{et al.}~\cite{emura2017group}   } } &
      \head{ \rotatebox[origin=c]{180}{  Ohara \textit{et al.}~\cite{ohara2019shortening}  } } &    
      \head{ \rotatebox[origin=c]{180}{ LBR~\cite{slamanig2016linking} } } &
      \head{ \rotatebox[origin=c]{180}{ Chu \textit{et al.}~\cite{chu2012verifier} } } &
      \head{ \rotatebox[origin=c]{180}{ Bringer \textit{et al.}~\cite{bringer2011backward} } }
      
      &\head{}&
      
      \head{ \rotatebox[origin=c]{180}{ Echeverría \textit{et al.}~\cite{tactical_cloudlets} } }&
      \head{ \rotatebox[origin=c]{180}{ Boneh \textit{et al.}~\cite{BonehIBE} } }
      
      &\head{}&

      \head{ \rotatebox[origin=c]{180}{ Nymble~\cite{tsang2009nymble} }   }  &
      \head{ \rotatebox[origin=c]{180}{ Mix Zones~\cite{freudiger2007mix}  } }&
      \head{ \rotatebox[origin=c]{180}{ Octopus~\cite{Octopus} }   }  
      
      &\head{}&

      \head{ \rotatebox[origin=c]{180}{ Camenisch~\textit{et al.}\cite{camenisch2002dynamic}  } }&
      \head{ \rotatebox[origin=c]{180}{ PEREA~\cite{au2008perea} } }&
      \headNoLine{ \rotatebox[origin=c]{180}{ \Centerstack[t]{ Systems \kern2em     } } }&

      \\
      \sbox0{S}%
      \rule{0pt}{\dimexpr\ht0 + 2ex\relax}%

      & &
      & \cmark  & \cmark  & \cmark  & \cmark  & \cmark  & \xmark & \xmark & \xmark  & \cmark       
      &\cellcolor[HTML]{EFEFEF}       
      & \cmark  & \cmark  & \cmark  & \cmark  & \ymark  & \xmark  & \xmark & \cmark & \cmark 
      & \cellcolor[HTML]{EFEFEF}
      & \cmark & \xmark 
      & \cellcolor[HTML]{EFEFEF}
      & \cmark  & \cmark & \cmark    
      & \cellcolor[HTML]{EFEFEF}
      & \xmark  & \xmark 
      & \cellcolor[HTML]{EFEFEF} & \tableRotFix{ \Centerstack{ VLR\\ }  }
      
      \\
      
      & &
      & \cmark & \cmark & \cmark & \cmark & \cmark & \cmark & \cmark & \cmark & \xmark    
      &\cellcolor[HTML]{EFEFEF}       
      & \cmark & \cmark & \cmark & \cmark & \cmark & \cmark & \cmark & \xmark &  \xmark
      & \cellcolor[HTML]{EFEFEF}
      & \xmark & \cmark
      & \cellcolor[HTML]{EFEFEF}
      & \cmark & \xmark & \xmark
      & \cellcolor[HTML]{EFEFEF}
      & \xmark &  \xmark
      & \cellcolor[HTML]{EFEFEF} & \tableRotFix{ \Centerstack{ BU\\ } }

      \\
      
      & &     
      & \tableRotFix{E-b} & \tableRotFix{S-b} & \tableRotFix{S-b} & \tableRotFix{S-b} & \tableRotFix{S-b} & \tableRotFix{S-b} & \tableRotFix{--} & \tableRotFix{--}  & \tableRotFix{S-b}  
      &\cellcolor[HTML]{EFEFEF}       
      & \tableRotFix{S-b} & \tableRotFix{S-b} & \tableRotFix{S-b} & \tableRotFix{S-b} & \tableRotFix{S-b} & \tableRotFix{S-b} & \tableRotFix{--} & \tableRotFix{E-b} &  \tableRotFix{S-b}
      & \cellcolor[HTML]{EFEFEF}
      & \tableRotFix{S-b} & \tableRotFix{--}
      & \cellcolor[HTML]{EFEFEF}
      & \tableRotFix{S-b}  & \tableRotFix{--} & \tableRotFix{--} 
      & \cellcolor[HTML]{EFEFEF}
      & \tableRotFix{--} & \tableRotFix{S-b}
      & \cellcolor[HTML]{EFEFEF} & \tableRotFix{ \Centerstack{ RDS\\ } } 
      
      \\
      
      & &     
      & \tableRotFix{$R~p~\log(T)$} & \tableRotFix{$R~p~T$} & \tableRotFix{$R~p~T$} & \tableRotFix{$R~p~T$} & \tableRotFix{$R~p~T$} & \tableRotFix{$R~p~T$} & \tableRotFix{--} & \tableRotFix{--} &   \tableRotFix{$R~p~T$}
      &\cellcolor[HTML]{EFEFEF}       
      & \tableRotFix{$R~T$} & \tableRotFix{$R~T$} & \tableRotFix{$R~p~T$} & \tableRotFix{$R~p~T$} & \tableRotFix{$R~T$} & \tableRotFix{$R~\log(N/R)$} & \tableRotFix{$R$} & \tableRotFix{$R~\log(T)$} & \tableRotFix{$R~T$}  
      & \cellcolor[HTML]{EFEFEF}
      & \tableRotFix{$R$} & \tableRotFix{--}
      & \cellcolor[HTML]{EFEFEF}
      &  \tableRotFix{$R~p~T$} & \tableRotFix{$R$} &  \tableRotFix{$R$} 
      & \cellcolor[HTML]{EFEFEF}
      & \tableRotFix{$R$} & \tableRotFix{$R~T$}
      & \cellcolor[HTML]{EFEFEF} & \tableRotFix{ \Centerstack{ Revocation\\data size\\for an epoch } }
      
      \\

      & &     
      & \tableRotFix{$1$} & \tableRotFix{$N$} & \tableRotFix{$N$} & \tableRotFix{$N$} & \tableRotFix{$N$} & \tableRotFix{$1$} & \tableRotFix{$N$} & \tableRotFix{$N$}  & \tableRotFix{$N~p$}    
      &\cellcolor[HTML]{EFEFEF}       
      & \tableRotFix{$N$} & \tableRotFix{$N$} & \tableRotFix{$N$} & \tableRotFix{$(p+1)~N$} & \tableRotFix{$N$} & \tableRotFix{$N~\log(N)$} & \tableRotFix{$R$} & \tableRotFix{$N~\log(T)$} &  \tableRotFix{$N$}
      & \cellcolor[HTML]{EFEFEF}
      & \tableRotFix{$N$} &  \tableRotFix{$N$}
      & \cellcolor[HTML]{EFEFEF}
      &  \tableRotFix{$V$} & \tableRotFix{$N$} &  \tableRotFix{$N$} 
      & \cellcolor[HTML]{EFEFEF}
      & \tableRotFix{$N$} &  \tableRotFix{$N$}
      & \cellcolor[HTML]{EFEFEF} & \tableRotFix{ \Centerstack{  PM storage\\(excluding\\revocation data) } }

      \\
      
      & &     
      & \tableRotFix{$p$}  & \tableRotFix{$p~T$}  & \tableRotFix{$p~T$}  & \tableRotFix{$p~T$}  & \tableRotFix{$p~T$}  & \tableRotFix{$p~T$}  & \tableRotFix{$p~T$}  & \tableRotFix{$p~T$}  & \tableRotFix{$p~T$}  
      &\cellcolor[HTML]{EFEFEF}       
      & \tableRotFix{$1$} & \tableRotFix{$1$} & \tableRotFix{$1$} & \tableRotFix{$1$} & \tableRotFix{$1$} & \tableRotFix{$\log(N)$} & \tableRotFix{$1$} & \tableRotFix{$\log(T)$} &  \tableRotFix{$1$}
      & \cellcolor[HTML]{EFEFEF}
      & \tableRotFix{$1$} & \tableRotFix{$1$}
      & \cellcolor[HTML]{EFEFEF}
      & \tableRotFix{$p~T$} & \tableRotFix{$1$} &   \tableRotFix{$1$} 
      & \cellcolor[HTML]{EFEFEF}
      & \tableRotFix{$1$} & \tableRotFix{$p$}
      & \cellcolor[HTML]{EFEFEF} & \tableRotFix{ \Centerstack{Client\\storage for\\an epoch }  }
      
      \\
      
      & &     
      & \tableRotFix{$\log(T)$}  & \tableRotFix{$1$}  & \tableRotFix{$1$}  & \tableRotFix{$1$}  & \tableRotFix{$1$}  & \tableRotFix{$1$}  & \tableRotFix{$1$}  & \tableRotFix{$1$}  & \tableRotFix{$1$}  
      &\cellcolor[HTML]{EFEFEF}       
      & \tableRotFix{$1$} & \tableRotFix{$1$} & \tableRotFix{$1$} & \tableRotFix{$1$} & \tableRotFix{$1$} & \tableRotFix{$1$} & \tableRotFix{$1$} & \tableRotFix{$\log(T)$} &  \tableRotFix{$1$}
      & \cellcolor[HTML]{EFEFEF}
      & \tableRotFix{$1$} & \tableRotFix{$1$}
      & \cellcolor[HTML]{EFEFEF}
      & \tableRotFix{$1$} & \tableRotFix{$1$} & \tableRotFix{$1$} 
      & \cellcolor[HTML]{EFEFEF}
      & \tableRotFix{$1$} & \tableRotFix{$p$}
      & \cellcolor[HTML]{EFEFEF} & \tableRotFix{ \Centerstack{Signature\\size or\\verification} }
      
      \\

      \multirow{-24.5}{*}{\rotatebox{-90}{$T$: The number of time slots in one epoch. \myquad[1.8] PM: Pseudonym Manager.  \myquad[5.8] V: Number of verifiers. }} &
      
      \multirow{-24.5}{*}{\rotatebox{-90}{RDS: Revocation Data Structure. \myquad[5.7] S-b: Slot-based.  \myquad[10.2] E-b: Epoch-based. }}&
      
      \multirow{-24.5}{*}{\rotatebox{-90}{$N$: Number of clients. \myquad[10] $R$: Number of revoked clients. \myquad[4.5] $p$:~ Required pseudonyms. }}

      & \tableRotFix{$R~p~\log(T)$} &  \tableRotFix{$R$} & \tableRotFix{$R$} & \tableRotFix{$R$} & \tableRotFix{$R$} & \tableRotFix{--} & \tableRotFix{--} & \tableRotFix{--}  & \tableRotFix{$R~p$} 
      & \multirow{-20}{*}{\rotatebox{-90}{\cellcolor[HTML]{EFEFEF}Public-key Encryption}}       
      & \tableRotFix{$R$} & \tableRotFix{$R$} & \tableRotFix{$R~p$} & \tableRotFix{$R~p$} & \tableRotFix{$R$} & \tableRotFix{$R$} & \tableRotFix{--} & \tableRotFix{$R~\log(T)$} & \tableRotFix{$R$}
      & \multirow{-20}{*}{\rotatebox{-90}{\cellcolor[HTML]{EFEFEF}Group Signatures}} 
      & \tableRotFix{$R$} & \tableRotFix{--}
      & \multirow{-20}{*}{\rotatebox{-90}{\cellcolor[HTML]{EFEFEF}Identity Based Encryption}}
      &  \tableRotFix{$R~p$} & \tableRotFix{$R$} &  \tableRotFix{$R$} 
      & \multirow{-20}{*}{\rotatebox{-90}{\cellcolor[HTML]{EFEFEF}Symmetric Encryption}}
      &  \tableRotFix{$R$} &  \tableRotFix{$R$}
      & \multirow{-20}{*}{\rotatebox{-90}{\cellcolor[HTML]{EFEFEF}Cryptographic Accumulators}}  & \tableRotFix{ 
 \Centerstack{ Revocation\\communication\\cost} }

      \\

    \end{tabular}%

    \kern19.5mm 
  \endgroup

\end{adjustbox}
}}

\caption{Properties and complexity offered by different systems, we omit $\mathcal{O}()$ notation for simplicity.}
\label{table:systems}
\end{table}

\fakedescription{Comparison:} Table~\ref{table:systems} summarizes the differences between the related work. We highlight that, with our \theabstractionshort implementation, the size of the revocation information grows only logarithmically with the number of time slots. Cryptographic accumulators do not offer VLR. Symmetric encryption cannot protect user privacy from the verifier, and \ac{IBE} schemes revoke clients by timeout~\cite{BonehIBE} (failing VLR) or issuing the user identifier in a CRL~\cite{tactical_cloudlets} (breaking anonymity). When using GS, clients can generate multiple unlikable signatures with the same secret key, achieving $\mathcal{O}(1)$ for the client storage. Some of these GS schemes revoke clients without providing BU, by simply publishing a token for all possible signatures~\cite{bringer2011backward,chu2012verifier}. Other solutions cannot provide the VLR property~\cite{slamanig2016linking,ohara2019shortening}. Although GS schemes that offer both BU and VLR simultaneously require the revocation procedure to manage a number of credentials that is proportional to the number of time slots $\mathcal{O}(R~p~T)$~\cite{bastos2018signature,rahaman2017provably} and $\mathcal{O}(R~T)$~\cite{emura2017group,nakanishi2005verifier,ishida2018fully}. Another limitation of GS schemes is that they rely on complex and heavy cryptographic operations, in particular, to support revocation; this can induce large latencies when performing authentication.  PKI based schemes are appealing due to their cryptographic efficiency and wide adoption. Some PKI schemes delay the revocation until all pseudonyms expire~\cite{davidson2018privacy,ETSI_103,ifalV2X,Vtoken}, breaking VLR.  Previous schemes that provide both VLR and BU suffer from the same issue as GS scheme~\cite{sun2010efficient,whyte2013security,khodaei2018efficient,haas2011efficient}, by locking pseudonyms to time slots, they require revocation information that is linear with the time slots, times each pseudonym, $\mathcal{O}(R~p~T)$. In addition, these solutions require the clients to carry pseudonyms for all time slots, imposing a storage burden of $\mathcal{O}(p~T)$.

\section{System Model}

This section presents preliminaries on \theabstractionshort and \theedgesystem, which is an edge authentication system based on \theabstractionshort. In \theedgesystem, to perform authentication, a client first uses an \theabstractionshortsingle to obtain a \textit{capability}. This capability, which is only valid for a given target time slot, is then presented to a verifier. To revoke the use of a \theabstractionshortsingle during a range of time slots, the corresponding capabilities are revoked.  There is a level of indirection between the \theabstractionshort assigned to clients and the capabilities used for authentication and revocation that is the enabler to achieve backward unlinkability. 

\subsection{Entities}
\label{sec:entities}

\theedgesystem is composed of three main types of entities: clients, verifiers, and a (distributed) pseudonym manager service. We follow a nomenclature similar to previous work~\cite{rahaman2017provably,haas2011efficient}.

\fakedescription{Clients:} the application client that generates signatures to perform authentication against any verifier. Clients are the holders of \theabstractionshort that they use to generate capabilities to ensure anonymity. Clients are responsible for renewing their \theabstractionshort when needed. 

\fakedescription{Verifiers:} the component that performs client authentication before granting access to a resource. Verifiers are responsible for checking the capabilities provided by clients before granting access. They are also responsible for updating their state by fetching the list of revoked capabilities from the manager servers.

\fakedescription{Pseudonym Manager (PM):} this component is responsible for providing new \theabstractionshort to clients and, when necessary, revoking capabilities generated from these pseudonyms. PM servers are the only entity capable of accessing the true identity of a client so, in our implementation, they run partially inside a \ac{TEE}, ensuring users' anonymity even if the device is compromised. 

\fakedescription{Administrator:} a trusted entity responsible for adding clients to the system and request PMs to revoke clients. 

\begin{figure}
\centering
\includegraphics[width=1.05\columnwidth]{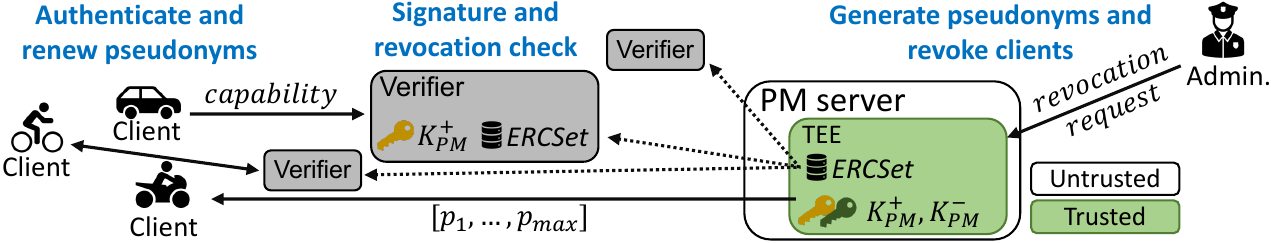}
\captionof{figure}{System Overview.}
\label{fig:simple_overview}
\end{figure}

\medskip

Figure~\ref{fig:simple_overview} shows the interactions between these entities, where the PM uses a \ac{TEE}. The figure represents a typical collective perception scenario, where mobile devices are used to extend human perception. In this example, mobile devices (the clients) authenticate towards the verifiers, to update or download information. When required, clients can contact a nearby PM to renew the set of \theabstractionshort used to generate capabilities. Periodically, verifiers will pull from the PM updated revocation information.

\subsection{Fault Model}

We assume a partial synchrony model~\cite{dls}. In this model there are unstable periods when messages may be arbitrarily delayed, and stable periods when messages between correct entities arrive within at most $\Delta$ units of time. Additionally, we assume that correct processes have access to loosely synchronized clocks, which can differ at most by $\epsilon$. We assume that at most $f$ server nodes can be faulty. We do not place constraints on the number of faulty clients.

Furthermore, verifiers and clients are insecure and prone to Byzantine faults~\cite{byzantinegenerals}. PM servers are executed (partially) inside TEEs and are only subject to crash and omission faults~\cite{Cristian93}. Thus, faulty clients may use expired or invalid credentials when contacting servers, faulty verifiers may arbitrarily deny or grant access to resources, but faulty PMs will never provide faulty information, and will never renew pseudonyms for revoked clients. \theedgesystem ensures liveness during stable periods and offers graceful degradation during unstable periods: when the network is unstable and nodes are unable to receive up-to-date information in a timely manner, they may stop providing service, but never compromise safety.

\begin{table}[]
\centering
\resizebox{\columnwidth}{!}{%
\begin{tabular}{|c|l|c|l|}
\hline
Notation  & Definition                           & Notation             & Definition                                                                                                   \\ \hline
$t$/epoch & Large time interval                  & $s$                  & Time slot, part of an epoch                                                                                  \\
$\delta$  & Duration of a time slot $s$          & $c$                  & Capability                                                                                                   \\
$K^-,K^+$ & Private and public key               & $p$                  & Pseudonym                                                                                                    \\
$e^s$       & Node label for the time slot $s$                    & ERCSet/erc           & Encoded revoked capability sets                                                                              \\
$l_x$  & Latchkeys for $e^s$           & $i$                  & Identifier of pseudonym from a user in an epoch                                                              \\
$h$       & Latchkey tree height                 & $M$                  & Extra pseudonyms to circumvent false positives                                                               \\
\textit{f}         & Number of faulty nodes               & \textit{cid}            & Client identifier                                                             \\
$m$       & Bloom filter size (bits)             & $n$                  & Number of items inserted in a Bloom filter                                                                   \\
$k$       & Number of hash or index functions    & $c_s$                & Number of clients in EDGAR                                                                                   \\
$f_r$     & Fraction of pseudonyms to be revoked & $d$                  & Branching factor of the latchkey tree                                                                        \\
$x$       & False positive rate                  & \multirow{2}{*}{$I$} & \multirow{2}{*}{\begin{tabular}[c]{@{}l@{}}Maximum number of pseudonyms\\ a client can possess\end{tabular}} \\
$N$       & Number of PM replicas                &                      &                                                                                                              \\ \hline
\end{tabular}%
}
\caption{Table of notations.}
\label{table:notation}
\end{table}

\subsection{Threat Model}
\label{sec:ThreatModel}

We trust only administrators and PMs. Following related work~\cite{haas2011efficient,Vtoken}, an administrator is responsible for adding and revoking users in the system by contacting the PM server. We assume that each PM has a processor with \ac{TEE} (e.g., Intel SGX), as shown in Figure~\ref{fig:simple_overview}. All other entities within the system are considered untrusted and susceptible to the control of attackers, potentially engaging in malicious activities. Table~\ref{table:notation} provides the notations.

\fakedescription{Malicious Client:} may attempt to generate pseudonyms or capabilities to impersonate a valid client and access resources to which it is not authorized. It can also try to use old capabilities and pseudonyms after being revoked to authenticate towards verifiers.

\fakedescription{Malicious Verifier:} if a verifier is compromised, the resource that the verifier is protecting becomes unprotected, but this is not the problem we consider in this paper. For example, under a DoS attack, a verifier may be unable to refresh revocation information and should enter a ``safe-mode'' (the safe-mode behaviour is application specific but may be as simple as halting). The problem we consider is that a malicious verifier may try to perform linking attacks~\cite{Vtoken,haas2011efficient}, by associating (linking) different pseudonyms with a single client, breaking user anonymity. This attack becomes trivial when revocation lists that contain all pseudonyms of a client are published~\cite{haas2011efficient}. A malicious verifier may collect all the information/data that it observes, e.g., with the objective of deducing user identity.

\fakedescription{Malicious Pseudonym Manager:} PM code is split in two parts, one that runs inside the \ac{TEE} and one that runs outside the \ac{TEE}. The latter can be compromised and engage in malicious behavior, supporting many of the previously introduced attacks. The untrusted part of PM may attempt to modify, delay, block, or read all messages on the system. This behavior may be done in collusion with other entities to facilitate Linking Attacks or allow a user to evade revocation. Furthermore, we assume that a node suffering from denial of service (DoS) is one of the $f$ faulty nodes and that at most $f$ servers can be faulty.

\fakedescription{Trust Assumptions:} Entities use asymmetric key pairs to establish secure channels. Clients employ \theabstractionshort for authentication, integrity, and non-repudiation. Both the PM and the administrator hold unique key pairs, $(K_{PM}^-, K_{PM}^+)$ and $(K_{admin}^-, K_{admin}^+)$, respectively, being both public keys known to all entities. Specifically, the administrator's public key $K_{admin}^+$ is hard-coded in the enclave's source code. We assume that the PM correctly executes our protocol within the \ac{TEE}, where $K_{PM}^-$ remains securely within the enclave. The PM will only revoke users if instructed by the trusted and authenticated administrator, and will generate fresh pseudonyms for non-revoked and authenticated clients. We assume that there is no collusion between the trusted PM and the verifiers.

The communication between the administrator and the enclave is based on a PKI using their keys. We assume a trusted administrator who only revokes pseudonyms after informing the corresponding clients. Although supporting revocation auditability is beyond the scope of this paper, we discuss different approaches to extend EDGAR and ensure revocation auditability in Section~\ref{sec:RevocationAuditability}. Furthermore, both capabilities and revocation information are accompanied by a digital signature created using $K_{PM}^-$, confirming the origin from the PM \ac{TEE}.

In our work, we make the usual assumptions about the security of \ac{TEE}s/enclaves~\cite{correia2020omega} (code/data executed/stored inside the TEE have integrity and confidentiality guaranteed), about the cryptographic schemes (they satisfy their security properties) and cryptographic keys (secret and private keys are never disclosed). 
In the prototype, we use Ed25519 to generate digital signatures~\cite{bernstein2012high}. 
As a collision-resistant hash function, we use SHA-256.
We use Intel SGX as our \ac{TEE}, although our scheme can be easily adapted to other \ac{TEE}s. We leverage the Intel SGX SDK inside the enclave and OpenSSL outside (all in C/C++).

Although side-channel attacks such as Foreshadow and LVI~\cite{vanbulck2020lvi} exist, we consider the defense from these attacks to be orthogonal to our contribution; possible mitigations are discussed in Bagher\textit{et al.}~\cite{bagher2023sgx}. Correctly synchronizing concurrent data structures can mitigate exploits against synchronization bugs~\cite{weichbrodt2016asyncshock}, with the help of debugging checkers\footnote{In \theedgesystem implementation only a Bloom filter and the current epoch value are accessed concurrently inside the enclave.}~\cite{liu2018d4}.

\section{\theabstraction} 
\label{sec:rrps}

\theabstractionshort are a novel abstraction that provides authentication based on pseudonyms whose validity can be revoked for any time-range within their original validity period. 
Clients hold a number of RRPs that is proportional to the number of authentication actions they need to perform. A validity of an RRP is bounded to an \textit{epoch}. An epoch is divided into time \textit{slots} of length $\delta$. The parameter $\delta$ is application-specific but can be small, e.g., $1$ minute or less.  An epoch is assumed to be much larger than the slot, e.g., $1$ day.  Each RRP should be used for authentication at most once. To perform authentication, a client instantiates a \textit{capability} that is specific to target slot.  If a client is revoked for a time period, pseudonyms are not revoked directly; instead, only the capabilities associated with the time-slots of that period are revoked. We store these capabilities in an \textit{Encoded Revoked Capability Sets} (ERCSet). An \theabstractionshortsingle  can be revoked for a short period, by revoking only the capabilities associated with an interval of time-slots, or permanently, by revoking the capabilities associated with all future time-slots.  Since the revocation information is connected, indirectly, by capabilities, to pseudonyms, when using \thesystem, a client is required to carry a different RRP for each access it needs to perform. However, any given RRP can be used at any slot of the epoch. Thus, the number of RRPs a client needs to keep is independent of the granularity of the time-slots. This contrasts with previous pseudonym-based solutions, where clients need to carry a number of pseudonyms that grows linearly with the \textit{epoch granularity} (i.e., the number of slots in an epoch).

\subsection{Overview}
\label{sec:Overview}

Authentication based on \theabstractionshort uses 3 different related objects, namely (range-revocable) \textit{pseudonyms}, (time-bound) \textit{capabilities}, and ERCSet. At an abstract level, the operations supported by these objects are the following (Appendix~\ref{sec:workflow} offers the operations workflow):

\begin{itemize}[leftmargin=*]
    \medskip\item[--] $p^{epoch}$~$\leftarrow$\textsc{createRRP(}\textit{cid, epoch, $K^-_{PM}$}\textsc{):} used to create a new \theabstractionshort, that can be used by client \textit{cid} during a target \textit{epoch}. Only PMs, using their private key $K^-_{PM}$, can create \theabstractionshort.

    \medskip\item[--] $c^s$~$\leftarrow$\textsc{getCapability(}$p^{epoch}$, $s$, $K^-_p$\textsc{):} used to create a capability associated with an \theabstractionshortsingle $p^{epoch}$ for time slot $s$ ($s$ must belong to the epoch for which the pseudonym was created). Only PMs and the client that owns the pseudonym, and the correspondent private key $K^-_p$, can create capabilities.

    \medskip\item[--] \textit{boolean}~$\leftarrow$\textsc{verifyCapability(}$c^s$, $K^+_{PM}$\textsc{):} To verify if a capability was generated from a valid \theabstractionshort, used by verifiers during authentication, requires the PM public key $K^+_{PM}$.

    \medskip\item[--] \textit{ERCSet}~$\leftarrow$\textsc{createERCSet(}\textit{capabilities}\textsc{):} used only by PMs to create an ERCSet that encodes one or more given capabilities, using some one-way function, such that it is unfeasible to extract a capability from the ERCSet. These capabilities are filtered to ensure that they do not compromise unlinkability (Section~\ref{rrpimplementation}).

    \medskip\item[--] \textit{ERCSet}~$\leftarrow$\textsc{mergeERCSet(}\textit{erc}$_1$, \textit{erc}$_2$\textsc{):} used to merge two ERCSets so that a single ERCSet can be used to capture the revocation of multiple capabilities. PMs and verifiers can merge ERCSets.

    \medskip\item[--] \textit{boolean}~$\leftarrow$\textsc{isRevoked(}\textit{erc}, \textit{capability}\textsc{):} used to verify if a capability is part of an ERCSet. This operation is used by verifiers to check if a capability has been revoked.
\end{itemize}

The manager creates \theabstractionshort on request from authorized clients. If later an \theabstractionshortsingle needs to be revoked for a given range of time slots, the PM generates the corresponding capabilities and encodes them in an ERCSet that is sent to the verifiers.

Clients hold a small number of \theabstractionshort (e.g., corresponding to the number of distinct events), and instantiated a short-lived capability (for the current slot) to authenticate. Then, it presents the capability to the verifier. The verifier checks if the capability is correctly constructed, is \textit{genuine} (i.e., if it was generated from a valid \theabstractionshort) and subsequently check if the capability has not been revoked; only in this case, the client is granted access to the resource. 

To ensure unlinkability, a client must never present two capabilities generated from the same \theabstractionshortsingle, as capabilities generated from the same \theabstractionshortsingle can be linked (cf.~Section~\ref{rrpimplementation}). Therefore, clients have to carry a number of \theabstractionshort proportional to the number of resources they need to access. However, contrary to previous systems, the revocation of an \theabstractionshortsingle for a time-slot does not expose capabilities that may have been used in non-revoked time slots: this is guaranteed by the use of a one-way function to encode revoked capabilities.

\subsection{Making Range-Revocation Efficient}
\label{rangerevocation}

A problem with the use of time-bound pseudonyms is that the number of pseudonyms that need to be revoked grows with the granularity of the time slots. \theabstractionshort are not immune to this problem, because to revoke the use of an \theabstractionshortsingle in a range of time slots, all capabilities associated with those time slots need to be encoded in the ERCSet. However, our implementation of \theabstractionshort uses a mechanism that allows the revocation cost to grow only logarithmically with the granularity, rather than linearly, as previous approaches.

To achieve this goal, a capability is represented by a sequence of \textit{latchkeys}, extracted from a set of latchkeys that are associated with a given RPP. The latchkeys are organized in a tree of fanout $d$, such that there is a leaf latchkey for each individual time slot on an epoch (in this paper, we use $d=2$, i.e., binary latchkey trees). Figure~\ref{fig:tree-v01} provides a simple example where a binary tree of latchkeys is associated with an epoch of 1 hour divided in 4 time-slots of 15 minutes. Note that the latchkey tree structure resembles but is not a Merkle tree~\cite{merkletrees}: the tree nodes are generated independently (the value of a parent node does not depend on the value of its children).

\begin{figure}[th]
  \centering
    \centerline{\includegraphics[width=0.9\columnwidth]{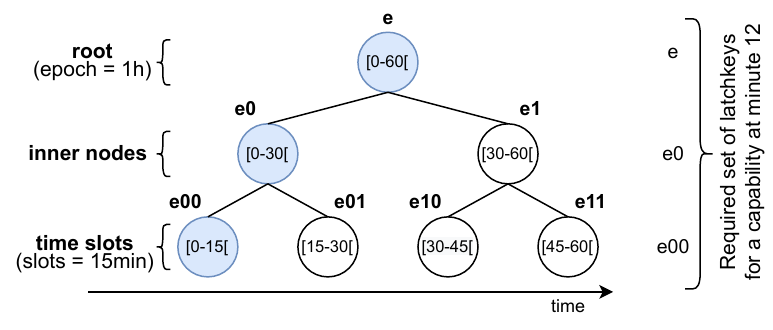}}
  \caption{Latchkeys for time slot $[0,15[$ of epoch $[0,60[$.}
  \label{fig:tree-v01}
\end{figure}

A capability for a given time slot is represented by the set of latchkeys in the path from the root of the tree to the corresponding leaf node in the tree. Using the example of Figure~\ref{fig:tree-v01}, the capability for the first slot would be represented by the following set of latchkeys: $\{ e, e0, e00 \}$. Note that each capability, for each time slot, is always different, because it contains one unique leaf latchkey. However, different capabilities may have some latchkeys in common; in particular, all capabilities include the root latchkey. 

A capability is only considered \textit{valid} if all latchkeys used to represent it are valid. Therefore, the capabilities can be revoked by invalidating any of its latchkeys. In particular, a capability for a given time slot can be revoked by invalidating the leaf latchkey associated with that slot. However, it is also possible to revoke multiple latchkeys by invalidating latchkeys that are inner nodes of the tree: by invalidating an inner node, all the capabilities that are part of the sub-tree rooted at that inner node are invalidated. This can also be illustrated using our example. Consider that the client is revoked at the beginning of the second slot ($e01$) 
until the end of the epoch. At this point, the client may already have used its pseudonym $p$ to generate a capability to access the resource during the first slot. To prevent linkability, the latchkeys used in the first slot cannot be revoked, i.e, the latchkeys $e, e0$ and $e00$ cannot be revealed. To revoke all future capabilities that may be generated with pseudonym $p$, it suffices to revoke latchkeys $e01$ and $e1$. Note that the capabilities generated for the second slot must use $e01$, and the capabilities generated for the third/fourth slots must use $e1$. 

We use this construction to perform revocation efficiently. ERCSets do not explicitly contain capabilities, but only latchkeys that belong to those capabilities and a single latchkey can be used to revoke multiple capabilities. It is easy to show that the number of latchkeys that need to be revoked is at most $\log_d$ with the number of slots. In fact, for each pseudonym valid in the epoch, the number of latchkeys will be given by $\log_d(\textit{granularity})$. 

\subsection{\theabstractionshortsingle Implementation}
\label{rrpimplementation}

We now describe the construction of \theabstractionshort.

\fakedescription{Scheme assumptions:}
We assume that the epoch and time slot size are publicly known to all parties in the system. This means that any party can independently and consistently calculate all the labels from any leaf to the root (i.e., \textit{e}, \textit{$e^0$}, etc.), as illustrated in Figure~\ref{fig:tree-v01}. There is also a maximum number of pseudonyms $I$ that any client can use in any given epoch.

\fakedescription{Cryptographic primitives:} 
We assume there are sources of entropy and a function that allow generating random asymmetric key pairs $(K^-, K^+)$. We assume that there is a function, named $\textsc{DetKeyGen}(\textit{seed})$,  to generate asymmetric key pairs deterministically from a \textit{seed} value. There is also a deterministic signature scheme that, given some private key $K^-$ and a \textit{text} as input, will output a deterministic signature $\textit{sig} = \textsc{DetSign}(K^-, \textit{text})$. The output \textit{sig} can be verified by \textit{true}/\textit{false} = $\textsc{VerSign}(K^+, \textit{text}, \textit{sig})$. Lastly, there is a secure one-way function $\textsc{Digest}(\textit{text})$ that computes a digest on the \textit{text} input and is not possible to invert given the output.

\fakedescription{PM Keys:} There is an asymmetric key pair $(K_{PM}^-, K_{PM}^+)$ associated with every PM. The private key $K_{PM}^-$ is only known by the PMs and is kept in the implementation inside the TEE enclave. The public key $K_{PM}^+$ is known to all participants, including clients and verifiers.

\fakedescription{\theabstractionshort:} An $\textit{RRP}$ is a tuple $\langle \textit{cid}, \textit{epoch}, \textit{i}, K_{p}^-, K_{p}^+, sig_p \rangle$
where \textit{cid} is the client identifier (only known by the client and the PM), \textit{epoch} is the time windows for which the pseudonym is valid,  \textit{i} is a label that can be used to distinguish each pseudonym instance generated for the same epoch, where $i \in [1,I]$. The $(K_{p}^-, K_{p}^+)$ is a unique asymmetric key pair associated with the pseudonym, and $sig_p$ is the signature performed with the private key of the PM over the concatenation of the epoch, and public key of the pseudonym, $\textit{sig}_p = \textsc{DetSign}(K_{PM}^-, \textit{epoch} \mathbin\Vert K_{p}^+)$. Note that some fields of an \theabstractionshortsingle are secrets known only to the client and PM and never revealed to a verifier. In particular, only the client and the PM know the secret key $K_{p}^-$ associated with a given pseudonym. To obtain an \theabstractionshortsingle, a client establishes a secure channel with a PM, presents its client identifier \textit{cid}, and obtains one or more \theabstractionshort for some given target \textit{epoch}. When describing \theedgesystem, we will discuss for which epochs clients are allowed to obtain \theabstractionshort from a PM. 

\fakedescription{Generating $(K_{p}^-, K_{p}^+)$:} The asymmetric key pair associated with a pseudonym is generated using the $\textsc{DetKeyGen}(\textit{$seed$})$ primitive. We use as seed the tuple $\langle \textit{cid}, \textit{epoch}, \textit{i}\rangle$, avoiding the need for the PM to memorize the information associated with all the pseudonyms it created, as it can always re-create them  (as explained below, the key pair is also needed to perform revocation). Recall that \textit{cid} is known only by the client and  the PM. This identifier is securely stored by the PM inside the enclave. Also, \textsc{DetKeyGen} is non-reversible, thus two different public keys created for different epochs and/or instances for the same client cannot be linked with the secret \textit{cid}.

\fakedescription{Latchkeys:} Latchkeys are unique for each pseudonym and are obtained by deterministically signing the label of the corresponding node with the private key $K_{p}^-$ of the pseudonym. Therefore, the latchkey $l_0$ associated with the label node $e^0$ of an $\textit{RRP}$, is generated as $l_0 = \textsc{DetSign}(K_{p}^-, e^0)$, and can be verified by using the public key of the pseudonym by performing $\textsc{VerSign}(K_{p}^+$, $e^0$, $l_0)$.

\fakedescription{Capabilities:} A capability $c$ for a given time slot $s$ is a tuple:

\[ c = \langle K_{p}^+, \textit{sig}_p, l_{leaf},~\ldots,~l_{00},~l_{0},~l_{root}\rangle \]

\noindent where $K_{p}^+$ is the public key of the pseudonym and the latchkeys correspond to the nodes on the path from the root of the latchkey tree to the leaf latchkey node associated with the time-slot $s$. Note that a capability has a number of latchkeys that is logarithmic with the granularity of the time-slots in the epoch. The latchkeys that are part of a capability can be generated on demand, when the capability is created, and are not required to be stored explicitly by the client. It should also be noted that any two capabilities generated from the same \theabstractionshortsingle reveal the same $K_{p}^+$ and can be linked; therefore, a client that wants to prevent authorization request to be linked should always use different \theabstractionshort.

To verify a capability, a verifier performs the following steps. First, it uses the public key of the PM to verify $sig_p$, calculating $\textsc{VerSign}(K_{PM}^+, \textit{epoch}\mathbin\Vert K_{p}^+, \textit{sig}_p)$. Then, it uses $K_{p}^+$ to verify if the latchkeys presented with the capability are in fact associated with that \theabstractionshortsingle, by performing $\textsc{VerSign}(K_{p}^+, e^x, l_x)$. If \textit{all} latchkeys can be verified using $K_{p}^+$ and follow a correct path from the current slot to the root, the capability is genuine. Note that a capability can be genuine but may have been revoked, as explained next.

\fakedescription{ERCSet:} an ERCset in an encoding of a set of latchkeys that represents a set of revoked capabilities. The set of latchkeys encoded in an ERCset has the following properties: \textit{inclusion-of-revoked} -- if a capability has been revoked, at least one of its latchkeys is encoded in the ERCSet; \textit{exclusion-of-non-revoked} -- if a capability has not been revoked, none of its latchkeys are encoded in the ERSet. Below we explain how latchkeys are selected to be encoded in the ERSet to satisfy these properties. Latchkeys are encoded in the ERCSet using a one-way function, $\textsc{Digest}(l_x)$ . Thus, verifiers can check if a given latchkey belongs to an ERCSet but cannot extract latchkeys from the ERCSet. Different data structures that rely on one-way functions could be used to implement ERCSet, including SHA256, or compact data structures such as Cuckoo filters~\cite{fan2014cuckoo}, Cascade filters~\cite{larisch2017crlite} or Count-min sketch~\cite{cormode2005improved}. We use Bloom filters to implement the ERCSet. Bloom filters are efficient and, as discussed later, a good fit for the \theedgesystem architecture. A disadvantage of Bloom filters is that they can present false positives, but we will explain later how \theedgesystem circumvents this limitation.

\fakedescription{Revoking a single capability:} To revoke a capability $c_p$ of a pseudonym $p$, the PM encodes in the ERCSet the leaf latchkey $l_x$ associated with $c_p$. This trivially satisfies the \textit{inclusion-of-revoked} and \textit{exclusion-of-non-revoked} properties: the encoded latchkey belongs to the revoked capability but does not belong to any other capability (each capability is associated with a distinct, unique, leaf latchkey).

\fakedescription{Revoking a range of capabilities:} Revoking a set of capabilities of a pseudonym could be trivially achieved by encoding the corresponding set of leaf latchkeys, but this would have a linear cost. The latchkey hierarchy is used to reduce this cost as follows. Let $d$ be the fanout of the latchkey tree. Any $d$ latchkeys that have the same parent in the latchkey tree can be replaced by their parent. This also satisfies 
the \textit{inclusion-of-revoked} and \textit{exclusion-of-non-revoked} properties of ERCSets: 1) the parent of any latchkey is part of the capability that includes that latchkey and 2) a parent latchkey is not included in capabilities other than the capabilities that include its children. The susbtitution of all $d$ sibling latchkeys by their parent latchkey can be applied recursively in the tree. Note that the root latchkey can only be included in an ERCSet when a pseudonym is revoked for the entire duration of the epoch, because the root latchkey belongs to all capabilities for that epoch. Appendix~\ref{sec:createERCSet} provides a precise description of this algorithm with pseudo-code.

\fakedescription{Merging ERCSet:} An advantage of using Bloom filters is that ERCSet can be easily merged by performing bitwise OR operations. This makes it easy to disseminate revocation lists for many different \theabstractionshort in a single data structure.

\fakedescription{Checking if capability revoked:} Verifiers receive ERCSets from PMs, store them, and use them to check if the capabilities presented by clients have not been revoked. After checking if a capability is genuine, verifiers test if any of the latchkeys are included in the most recent ERCSet. If even a single latchkey is in the ERCSet, the capability is considered revoked.

\subsection{\theabstractionshort Linkability Analysis}
\label{sec:analysis}

A key problem with previous approaches for performing pseudonym revocation is that the information used for revocation could be linked with the information used for access control (in particular, this is obvious when the pseudonym identifier is used both for authentication and revocation). This allows an adversary to collect information about the resources that have been accessed by revoked pseudonyms. When a client has several pseudonyms that are revoked together, an attacker can link the past usage of these pseudonyms to break the privacy of the user. \theabstractionshort avoid this problem because the information used for revocation cannot be linked with the information used for access control. Thus, if a client used one or more pseudonyms prior to revocation, the use of these pseudonyms cannot be linked based on the revocation data. 

Here we present an argument that RRPs offer unlinkability. A more detailed proof is provided in Appendix~\ref{sec:UnlinkabilityProof}.

\begin{observation}
\label{observation:lkgeneration}
Verifiers cannot generate latchkeys associated with a pseudonym.
\end{observation}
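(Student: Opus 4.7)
The plan is to reduce the impossibility of latchkey generation by a verifier to the existential unforgeability of the underlying deterministic signature scheme. Recall that the latchkey associated with a node label $e^x$ of a pseudonym with keys $(K_p^-, K_p^+)$ is defined as $l_x = \textsc{DetSign}(K_p^-, e^x)$, and a candidate value $l$ is accepted as a valid latchkey for that label exactly when $\textsc{VerSign}(K_p^+, e^x, l) = \textit{true}$. Generating a fresh latchkey therefore amounts to producing a fresh valid signature of $e^x$ under $K_p^-$, and the target of the proof is to show this is infeasible for any entity playing the role of a verifier.

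First I would enumerate everything a verifier can actually observe: the PM public key $K_{PM}^+$; any pseudonym public key $K_p^+$ together with its certificate $\textit{sig}_p$ that appears in some presented capability; the individual latchkeys contained in those presented capabilities; and the ERCSets, which encode latchkeys only through the one-way function $\textsc{Digest}$. Crucially, $K_p^-$ is generated inside the PM enclave by $\textsc{DetKeyGen}(\langle \textit{cid}, \textit{epoch}, i\rangle)$ from a seed containing the secret $\textit{cid}$, and is never released; by the TEE and trust assumptions it remains confined to the client and the PM. I would then argue that none of the observable material helps: recovering $K_p^-$ from $K_p^+$ would break the security of $\textsc{DetKeyGen}$; inverting a $\textsc{Digest}$ entry in an ERCSet to extract a latchkey would break the one-wayness of $\textsc{Digest}$; and producing $\textsc{DetSign}(K_p^-, e^x)$ directly would break existential unforgeability of the signature scheme. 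I would state these three facts as small auxiliary claims and compose them.

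The main obstacle will be the adaptive case, in which the verifier has already seen several latchkeys of the same pseudonym through legitimate capability presentations (up to $\log_d(\textit{granularity})$ per capability). I would handle this via a standard reduction: any efficient verifier that outputs a valid latchkey for a label $e^x$ on which no client has used the pseudonym can be turned into a forger against $\textsc{DetSign}$, by having the reduction use its signing oracle to simulate precisely those latchkeys that would have been legitimately revealed along the observed capability paths. For labels whose latchkey has already been exposed inside a presented capability, the verifier is merely \emph{replaying} a latchkey rather than \emph{generating} one, which lies outside the scope of the observation and is in any case ruled out by the one-time-use discipline imposed on \theabstractionshort.
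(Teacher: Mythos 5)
Your proposal is correct and rests on the same central fact as the paper's argument: a latchkey for label $e^x$ is $\textsc{DetSign}(K_p^-, e^x)$, and $K_p^-$ is derived via $\textsc{DetKeyGen}$ from a seed containing the secret \textit{cid}, which only the client and the PM ever hold. The paper's own justification (the inline \emph{Basis} for the observation, and Lemma~2 of Appendix~B) stops there: it argues only that the verifier does not \emph{possess} $K_p^-$, and leaves implicit (folded into Assumption~1, ``it is not possible to forge a signature without $K^-$'') the step from ``does not hold the key'' to ``cannot produce a valid signature.'' You make that step explicit with a reduction to existential unforgeability, and — more importantly — you are the only one of the two to address the adaptive setting in which the verifier has already seen up to $\log_d(\textit{granularity})$ latchkeys of the same pseudonym from a presented capability and might try to derive a sibling or descendant latchkey from them; your signing-oracle simulation of the legitimately revealed path is the right way to handle this, and your scoping remark that replaying an already-revealed latchkey is not ``generation'' matches how the observation is actually used in the paper's Argument~1 (verifiers can only \emph{test} latchkeys the client has handed them). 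Two minor quibbles: recovering $K_p^-$ from $K_p^+$ is a break of the signature scheme rather than of $\textsc{DetKeyGen}$ per se, and a fully formal reduction would also need to note that $\textsc{DetKeyGen}$'s seeded key pairs must be indistinguishable from honestly sampled ones for the EUF-CMA experiment to apply; neither affects the soundness of your argument, which is strictly more complete than the paper's.
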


\begin{basis}
Latchkeys are generated using the private key $K_{p}^-$ of the pseudonym. The private key is generated using the secret \textit{cid} that is shared between the PM and the client and never revealed to other entities. Therefore, verifiers cannot generate latchkeys.
\end{basis}

\begin{observation}
\label{observation:lkni}
ERCSets do not include latchkeys used outside the revocation interval.
\end{observation}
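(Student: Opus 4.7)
The plan is to prove the observation by establishing a structural invariant of the ERCSet construction (Section~\ref{rangerevocation}, Appendix~\ref{sec:createERCSet}) and then deducing the statement by contraposition. Fix a pseudonym $p$ with revocation interval $R$ within an epoch, and let $\mathcal{E}$ be the ERCSet obtained by encoding the revocation of $p$ on $R$. The invariant I want to establish is: \emph{for every tree node $v$ whose latchkey $l_v$ is encoded in $\mathcal{E}$, every leaf descendant of $v$ corresponds to a time slot in $R$}.

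I would prove the invariant by induction on the execution of the algorithm that builds $\mathcal{E}$. In the base case, the algorithm starts by marking the leaf latchkeys associated with the slots of $R$; each such leaf $v$ trivially satisfies the invariant, since its only leaf descendant is itself and it lies in $R$. The inductive step is the substitution rule, which replaces a complete set of $d$ sibling latchkeys by the latchkey of their parent $v$. By the inductive hypothesis, each of the $d$ children already satisfies the invariant, so all of their leaf descendants are in $R$. The set of leaf descendants of $v$ is exactly the union of the leaf descendants of its $d$ children, so it remains contained in $R$, preserving the invariant.

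With the invariant in hand, the observation follows immediately. Take any slot $s$ outside $R$ and consider any genuine capability $c^s$ of $p$, whose latchkeys correspond exactly to the nodes on the root-to-leaf path $\pi(s)$ in the latchkey tree. For every $v \in \pi(s)$, the leaf $e^s$ is a descendant of $v$, and $e^s$ does not correspond to a slot in $R$. The contrapositive of the invariant then gives $l_v \notin \mathcal{E}$, so no latchkey of $c^s$ is encoded in $\mathcal{E}$. The extension to merged ERCSets is immediate because the merge operation is a bitwise union that does not introduce latchkeys beyond those contributed by each individual pseudonym's ERCSet.

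The main obstacle is to argue formally that the substitution rule in Appendix~\ref{sec:createERCSet} is only applied when all $d$ children of the candidate parent have already been marked for inclusion, so that the parent is never promoted on the basis of a partial set of children that would include a non-revoked leaf. Once this algorithmic property is read off the pseudo-code, the rest of the proof reduces to the two-step induction above. A secondary subtlety worth acknowledging, but outside the scope of this observation, is that the Bloom-filter encoding of $\mathcal{E}$ may produce false positives; the invariant concerns exact membership of latchkeys, while probabilistic collisions are handled separately by the false-positive mitigation described for \theedgesystem.
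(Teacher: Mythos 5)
Your proof is correct and takes essentially the same approach as the paper: the paper's basis for this observation is a one-line appeal to the construction's \emph{exclusion-of-non-revoked} property, and its detailed version (Lemma~1 in Appendix~B) rests on exactly the invariant you establish, namely that a latchkey survives the filtering only if all of its leaf descendants lie in the revoked range. The only cosmetic difference is that you formalize the construction as a bottom-up sibling substitution, whereas the pseudo-code in Appendix~A works top-down (merge all path latchkeys, then \textsc{removeUnsafe} discards any node with a descendant outside the merged set); both enforce your invariant, so your induction, the contrapositive step, and your correctly flagged caveats about complete sibling sets and Bloom-filter false positives all align with the paper.
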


\begin{basis}
This property is achieved by construction, that ensures the exclusion-of-non-revoked. As described above, when a PM assembles an ERCSet, it never includes in the ERCSet latchkeys that are part of capabilities for time-slots outside the revocation interval.
\end{basis}

\begin{correctness-argument}
\label{teorem1}
Revocation information cannot be linked with authorization information used outside of the revocation interval.
\end{correctness-argument}

\begin{basis}

The revoked latchkeys are encoded in an ERCSet using $\textsc{Digest}(l_x)$, so verifiers cannot extract latchkeys from an ERCSet. Verifiers can only test if a given latchkey has been revoked. However, by Obs.~\ref{observation:lkgeneration}, verifiers cannot generate latchkeys, so they can only test latchkeys that are provided by the client when presenting a capability. By Obs.~\ref{observation:lkni}, latchkeys for capabilities associated with non-revoked time-slots are not included in an ERCSet.\end{basis}

\begin{correctness-argument}
\label{teorem2}
Capabilities generated from different pseudonyms cannot be linked.
\end{correctness-argument}

\begin{basis}
All the information in a capability depends on the asymmetric key pair associated with the pseudonym. Asymmetric key pairs for different pseudonyms are different because they are generated using different seeds (the unique instance number $i$ is part of the seed $\langle \textit{cid}, \textit{epoch}, \textit{i}\rangle$). Additionally, asymmetric key pairs cannot be linked to the seed used for generation (this derives directly from the standard properties of \textsc{DetKeyGen}).
\end{basis}

\section{\theedgesystem}

We now present the design of an anonymous authentication system for the edge that leverages \theabstractionshort to offer backward unlinkability. We have named our system \theedgesystem: \underline{EDG}e distributed \underline{A}ccess cont\underline{R}ol, targeting the VANET scenario~\cite{remeli2019automatic,ganan2015epa,mixzone}. The goal of EDGAR is to reduce the linkability window, improving client privacy at the edge. EDGAR demonstrates how to use our \theabstractionshort abstraction and how to address implementation challenges in a distributed setting.

\subsection{EDGAR in VANETs}
In the VANET scenario, vehicles continuously broadcast CAM messages~\cite{ETSI} containing various information such as their geolocation, sensor readings, direction, and speed. This information is crucial for various edge applications, including enhanced navigation, traffic congestion estimation, remote vehicle diagnostics, autonomous cars, and others~\cite{fipa}. However, as explained in Section~\ref{sec:background}, edge providers can collect and monetize this data, at the expense of users privacy, highlighting the importance for clients to use anonymous authentication methods such as EDGAR. We now contextualize the \theabstractionshort entities to the corresponding entities in \theedgesystem:

\fakedescription{Clients:} These are vehicles that constantly propagate CAM messages with location and sensor readings, with the purpose of enhancing their safety and that of others.

\fakedescription{Verifiers:} Mainly compose by Roadside Units (RSUs)~\cite{correia2022performance} that listen to all CAM messages, aggregate them, and broadcast them in the network. These devices can be deployed by various local entities (e.g., municipal authorities) or edge providers to improve traffic flow, pedestrian safety, and provide services to vehicles such as infotainment or software updates. 

\fakedescription{PM servers:} These are fog nodes with the same services as verifiers but with higher computational capacity and storage, and may be physically more distant than verifiers. Any fog node with TEEs can serve as a PM. We assume the same trust level as mentioned in Section~\ref{sec:ThreatModel}, where the PM does not share its private key or the client pseudonyms. However, the PM is controlled by edge providers, who can access the non-trusted zone outside the TEE.

\fakedescription{Administrator:} In the context of the edge, the administrator should be a trusted entity independent of all applications and providers within the edge. It should work similarly to the current Certificate Authorities (CAs) in PKI.

\begin{figure}[t]
  \centering
    \centerline{\includegraphics[width=0.99\linewidth]{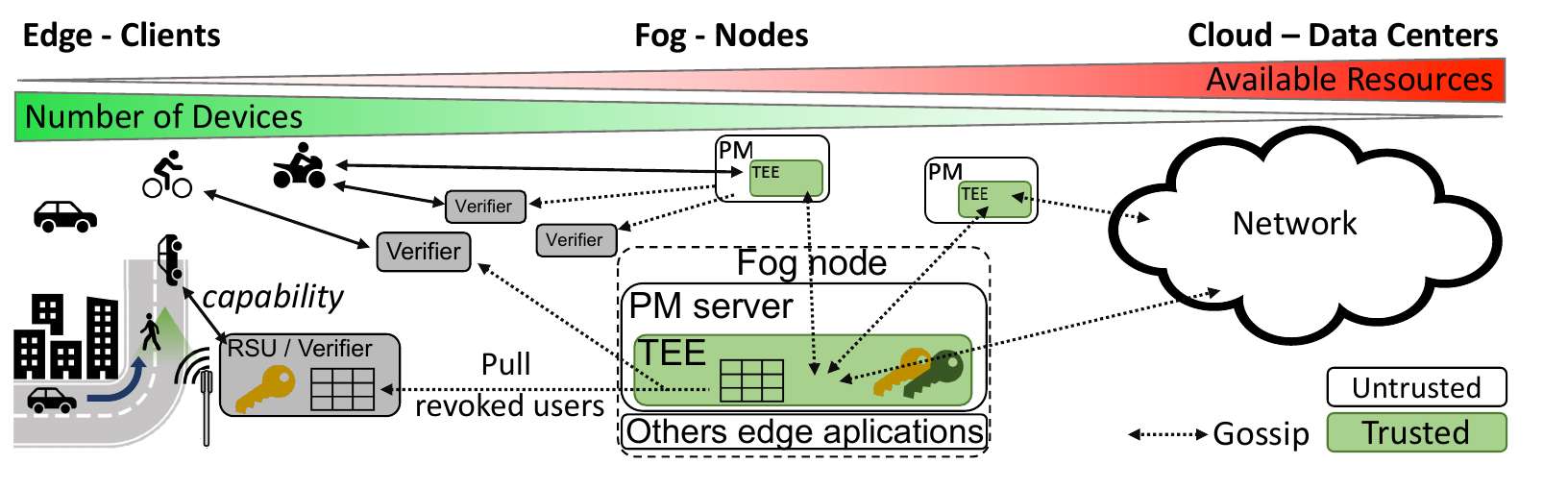}}
  \caption{\theedgesystem entities and VANETs interaction at the edge.}
  \label{fig:usecase_large}
\end{figure}

Figure~\ref{fig:usecase_large} illustrates the interactions in the edge environment. In this example, a vehicle (a client) presents a capability to the RSU (the verifier). If the capability is valid, the RSU accepts the message from the client and alerts the vehicles about pedestrians behind the corner. Authentication is critical to avoid false information that may cause other drivers to break without justification. The RSU relies on multiple nearby fog nodes, running edge replicas of the PM, to update its state. Vehicles can contact a nearby PM replica to renew the pseudonyms used to generate capabilities, if necessary.

\theedgesystem prevents the de-anonymization of clients based on the tracking the RRP usage (i.e., even if edge providers like Verizon, Akamai, and Amazon aggregate data from verifiers and the untrusted zone of the PM). Our pseudonyms also protect users' privacy in case of data leaks from verifiers.

\subsection{Revocation in \theedgesystem}

Although \theabstractionshort supports the revocation of a pseudonym in any range of time slots, in \theedgesystem we assume that clients can be revoked at a target \textit{revocation time slot} (RTS), selected by the administrator, and that all capabilities of that client are revoked for all time slots after RTS (i.e., the revocation range spans from RTS to the end of the epoch). Also, after being revoked, clients become unable to obtain new pseudonyms from edge PMs after some time. In particular, as we will show later, \theedgesystem is able to provide the following guarantee: if a client is revoked in a given epoch $t$, that client may still attempt to use pseudonyms it has obtained for epoch $t+1$ but will not be able to obtain pseudonyms for epoch $t+2$.

\subsection{Epochs, \theabstractionshort, and ERCSet}

Time is divided in epochs and epochs are divided in time slots. The length of an epoch and the granularity $\delta$ of the time slot are application specific. As we show in the evaluation, the efficient revocation mechanism of \theabstractionshort, based on the latchkey hierarchy, supports the use of relatively large epochs and fine-grain granularity, for instance, epochs of one day and time slots of 1 minute. 

There is a limit \textit{I} of the number of pseudonyms that a client can request for a given epoch. When using \theabstractionshort this is not a limitation because clients only need to have a pseudonym for each access regardless of the time slot where the pseudonym is used (and not a different pseudonym for each time slot, as in previous work). Also, clients are only allowed to obtain pseudonyms for the current epoch and for the next epoch (we allow clients to obtain in advance pseudonyms for the next epoch to avoid having PMs to be overload with a rush of requests whenever an epoch begins). This allows us to limit the number of pseudonyms that need to be added to ERCSet when a client is revoked. Also, verifiers only accept requests that use pseudonyms from the current epoch. This allows to garbage collect revocation information from previous epochs safely.

Due to the constraints described above, \theedgesystem is only required to maintain two ERCSets: one associated with the current epoch and another associated with the next epoch. When an epoch $t$ terminates, the ERCSet associated with epoch $t$ can be discarded and a fresh ERCSet is created for the next future epoch ($t+2$).

\subsection{ERCSet dissemination}

The revocation of a client is initiated in the central PM. The PM first generates all possible pseudonyms that the client may have obtained for the current epoch (i.e., by creating the pseudonyms for all instances $1$ \ldots $\textit{I}$) and creates an ERCSet that revokes all the capabilities that may be generated for these pseudonyms in the range starting from the \textit{revocation time slot} (RTS) to the end of the epoch. For this, it uses the algorithm described in Section~\ref{rangerevocation}. It then merges this ERCSet into the global $\textit{ERCSet}_{t}$ for the current epoch. The PM then generates all possible pseudonyms that the client may have obtained for the next epoch and creates an ERCSet that revokes these pseudonyms for the entire epoch (this is very efficient, because it suffices to include the root latchkey of each pseudonym in ERCSet); it then merges this ERCSet in the global $\textit{ERCSet}_{t+1}$.

\fakedescription{Disseminating client revocation among the PMs.} The updated values of $\textit{ERCSet}_{t}$ and $\textit{ERCSet}_{t+1}$ are then disseminated in the system using a two-step procedure. First, they are disseminated from the central PM to all edge replicas of the PM. Then, verifiers pull these values from their nearest PM replicas. \theedgesystem implements the propagation of ERCSets among PM replicas using a gossip-based broadcast protocol. The central PM first selects $f+1$ edge PMs at random and sends them the updated ERCSets. When receiving an ERCSet from another replica, a PM checks if the ERCSet is different from the local version.  If the ERCSet is the same, it discards the redundant update. If the ERCSet is different, it assumes that it may contain new information and merges it with its own ERCSet, picks other $f+1$ edge PMs at random, and sends them the updated ERCSets. This eager push strategy allows revocation information to be propagated quickly on the network. Additionally, a PM that does not receive any updates for more than a predefined gossip timeout engages in pull-gossip with another random PM. Pull gossip is used to recover from temporary crashes or disconnections. A PM that is down when a revocation is eagerly propagated will later obtain the information using pull gossip. Note that the ERCSet for a given epoch always accumulates new information. Thus, any single gossip exchange with an up-to-date server will convey all the information that a node may have missed while disconnected.

\fakedescription{Disseminating latchkey revocations to the verifiers.} The edge will consist of many verifiers placed at different locations. It is not efficient to have all these PM servers sending the same information to all verifiers. Therefore, we only use pull-gossip to propagate ERCSets to each verifier. Each controller periodically picks a PM at random, pulls $\textit{ERCSet}_{t}$ from that server, and merges its content with a local copy of $\textit{ERCSet}_{t}$. If a verifier fails to execute the pull-gossip procedure (possibly due to an adversary jamming the PM), it enters ``safe-mode'' (the specific behavior varies depending on the application, but could involve stopping the service to protect the resource).

\subsection{Obtaining New Pseudonyms}

Clients can obtain pseudonyms from edge PMs. \theedgesystem does not require edge PMs to keep an explicit list of all clients that have been revoked and of their corresponding \textit{revocation time slot}  as this is already encoded in the ERCSet. When requesting new pseudonyms, a client establishes a secure channel with any edge PM and provides its own \textit{cid} and a valid capability. If the client has not been revoked, the PM can provide the requested pseudonyms for the current of for the next epoch. If the client has been revoked, it will be denied access to additional pseudonyms.

To check if a client has not been revoked, a PM performs the following checks: first it verifies if the capability presented by the client is in fact associated to a pseudonym of that client. This procedure leverages that fact that any PM can create all pseudonyms of a client, and therefore, can check if the public key included in the capability corresponds to a public key of one of the valid pseudonyms for that client. Then, it checks if the capability has not been revoked. If the request passes these tests, the PM generates and sends the requested pseudonyms to the client.

\subsection{Size of ERCSets}

\theedgesystem uses Bloom filters to implement ERCSets. Bloom filters have $O(1)$ insertion and query time~\cite{luo2018optimizing}, are space-efficient, and can be merged easily. However, Bloom filters suffer from false positives and should be used with care. In fact, there is evidence that, if not used properly, the false positives generated by Bloom filters can jeopardize the operation of large-scale systems\cite{larisch2017crlite}. We first discuss how the size of the Bloom filters used to implement ERCSets is chosen in \theedgesystem. Later, we discuss how we deal with the fact that false positives cannot be entirely avoided.

The false positive rate of a Bloom filter depends on the filter size $m$ (bits), the number of items to be inserted $n$, and the number $k$ of hash or index functions used for insertion and search. The false positive rate can be approximated as described in~\cite{haas2011efficient}:

\begin{equation}\footnotesize
    P(\textit{false~positive}) =  \left ( 1 - \left ( 1 - \frac{1}{m}\right) ^{kn} \right )^{k}    
    \label{eq:false_positive} 
\end{equation}

In the case of ERCSets,  the average number of items in the Bloom filter is given by:

\begin{equation}\footnotesize
    n =  c_s \times I \times f_r \times \log_d\left ( \frac{\textit{epoch}}{\delta} \right )
    \label{eq:variaveis} 
\end{equation}

where $c_s$ is the number of clients, $I$ is the average number of pseudonyms that each client uses, $f_r$ is the fraction of pseudonyms that may need to be revoked, $d$ is the branching factor of the latchkey tree, \textit{epoch} is the length of an epoch, and $\delta$ is the length of the time slot. By using Eq.~\ref{eq:variaveis} to compute the number of latchkeys that are expected inserted in an epoch in a Bloom filter, we can use Eq.~\ref{eq:false_positive} to select the size of the Bloom filter that limits the probability of having a false positive to some pre-defined threshold.

Let us assume a scenario with $c_s = 250,000,000$ clients (the estimated number of vehicles in the USA), and assume a fraction of pseudonyms that need to be revoked of $10^{-4}$ per year (from~\cite{haas2011efficient}). If we set the length of an epoch $24h$,  this provides an average of revocations per epoch of $f_r = 10^{-4} / 365$. Then, if we set the granularity of the time slots to $\delta =10$ minutes. This yields $144$ time slots per epoch. If we use a binary tree of latch keys, the average number of latchkeys used per revocation is $\log_2(144)$. If we assume that clients need at most $10$ pseudonyms per day, the expected resulting number of items to be added to the Bloom filter is:
\[    n =  250,000,000  \times10  \times (10^{-4}/365) \times \log_2(144) \approx  4911 \]
 
In this scenario, a Bloom filter of $9 KB$ provides a false positive rate of $0.1\%$ (from Eq.~\ref{eq:false_positive}). We can determine the false positive rate of a capability by: $ P_{FP}(C) = 1-(1-x)^h $, where $x$ is the false positive rate of the Bloom filter, from Eq.~\ref{eq:false_positive}, and $h$ is the tree height. In the same scenario, that corresponds to a false positive of 6 in every 1000 capabilities. Additionally, if one wants to increase the granularity of the time slot to $\delta = 1$ minute, the number of time slots per epoch increases 10 times, but the number of latchkeys increases only logarithmically, thus the size of the ERCSet must increase only by a factor of $\log_2(1,440)/ \log_2(144)= 1.46$, i.e, an ERCset of $13 KB$ will be enough to maintain the same false positive rate

\subsection{Circumventing False Positives}

Even if the size of Bloom filters is set appropriately, there is always some probability of the occurrence of false positives. In \theedgesystem we bypass this problem by having clients request $M$ extra pseudonyms, in addition to those that are strictly needed to access the resources. If a false positive occurs, the system automatically picks another unused pseudonym and resubmits the authorization request to the verifier. The only perceived effect by the client is an additional latency in serving the request. We show below that the number of additional pseudonyms that a client needs to carry to circumvent the occurrence of false positives is small. Equation~\ref{eq:full_access} describes the probability that a client will execute all authentication successfully with the help of the $M$ extra pseudonyms.

\begin{equation}\footnotesize
    P(\textit{full~access}) =  1 - \sum_{j=M+1}^{p+M} \Mycomb[p+M]{M+1} \times (  P_{FP}(C)  )^j \times ( 1 - P_{FP}(C)  )^{p+M-j} 
    \label{eq:full_access} 
\end{equation}

When applying Equation~\ref{eq:full_access} to the previous scenario, where $\delta=10$ minutes, and setting $M=0$, it is possible to derive that $1\%$ of the clients may fail some authentication; this number can be reduced to $1.9\cdot10^{-12}$ by setting $M=4$. These extra 4 pseudonyms will require increasing the filter size from $9 KB$ to just $13 KB$ (to achieve the same probability with $M=0$ would require increasing the size of the Bloom filter by $34 KB$).  If each client would require 1000 pseudonyms instead, the probability of a client successfully executing all authentication with $M=0$ is just $63\%$ but when using $M=15$ it increases to $1 -(1.9\cdot10^{-14})$, with a storage increase from $883 KB$ to $896 KB$ (to achieve the same probability with $M=0$  would require a filter of $3.89 MB$).

Leveraging $M$ extra pseudonymous is a space-efficient solution to make the effect of false positives negligible, even for large-scale systems such as \theedgesystem. We could consider alternative encoding techniques that completely eliminate false positives, such as cascade filters~\cite{larisch2017crlite}. However, this would require anticipating all possible false positives in order to create the multiple filter levels; in a scenario of millions of vehicles with multiple pseudonyms, this operation would be very expensive and might become infeasible.

\subsection{Handling Epoch Changes and Quarantine}

When a client is revoked, all future capabilities that can be generated from the pseudonyms it may have obtained are revoked. As discussed above, if a client is revoked in epoch $t$, this requires revoking capabilities for future time slots in epoch $t$ and all the capabilities for epoch $t+1$. Capabilities for epoch $t+2$ and other future epochs do not need to be revoked because \theedgesystem ensures that a client that is revoked in epoch $t$ cannot obtain pseudonyms for epoch $t+2$. This property is guaranteed by a coordination phase that is executed by any PM when it transitions from epoch $t$ to epoch $t+1$. The purpose of the coordination phase is to ensure that any PM that enters in epoch $t+1$ is aware of all revocations performed in epoch $t$ and, therefore, will refuse to issue pseudonyms for epoch $t+2$ to clients that have been revoked in epoch $t$. During coordination, a PM enters in a quarantine mode, where it cannot serve pseudonym requests for epoch $t+2$.

The coordination protocol is implemented by forcing each PM to send to every other PM its version of ERCsets$_t$ and ERCsets$_{t+1}$ at the beginning of the quarantine. Furthermore, a PM waits to receive revocation information from at least $N-f$ PMs before ending the quarantine. Because revocation is performed by updating $f+1$ PMs, and a PM waits for the input of other $N-f$ PMs, for each revoked client, a PM is guaranteed to receive at least one up-to-date ERCset that includes the corresponding revoked capabilities. At the end of the quarantine, a PM is guaranteed to be fully aware of all revocations that have occurred in epoch $t$ and can start serving requests for pseudonyms in epoch $t+2$.

\subsection{Handling a PM failure}

The temporary failure or disconnection of a PM is treated as follows. When the PM server recovers, it will immediately start the pull-gossip procedure. Eventually, it will be able to get up-to-date information on the revoked clients. The same applies to temporarily disconnected PMs. A PM that is offline for a short period of time can operate normally, even if it is slightly outdated. If it is contacted by a verifier, it will not be able to provide the most recent revocation information, but the verifier will be able to fetch that information from another PM in the next gossip interaction. If it is contacted by a revoked client, it may issue new pseudonyms to that client for the current or the next epoch. However, the corresponding latchkeys for those pseudonyms have already been revoked by other PMs, and the client will be revoked in a bounded time.

\subsection{Revocation Auditability}
\label{sec:RevocationAuditability} Informally, revocation auditability refers to a user's ability to verify its revocation status at a service provider before attempting to authenticate. As mentioned in Section~\ref{sec:ThreatModel}, for the current prototype we assume that the administrator informs the clients before proceeding with the revocation. As future work, we plan to augment the system with additional strategies to support revocation auditability. One approach is to use contract-based revocation~\cite{henry2011formalizing}, where the contract semantics are agreed upon by both the user and the provider. This enables the user to determine whether a certain action will constitute misbehavior before deciding whether to engage in it. Another approach has been implemented in Nymble~\cite{tsang2009nymble}. To ensure that fresh revocation information reaches the client, revocation lists must be  published at regular $\Delta i$ time intervals, containing a signature with the corresponding timestamp. When a client communicates with a verifier, it can first request the list, which must have a fresh signature for the current $\Delta i$, and then check if it has not been revoked, otherwise it should halt the authentication process.

\subsection{Discussion}
\label{sec:PropertiesDone}

In this section, we discuss the key features of \theedgesystem.

\fakedescription{Epoch Based Pseudonyms:} Pseudonyms in \theedgesystem are bound to epochs instead of slots. Capabilities are bound to slots, but can be generated at any moment by the client. This decoupling allows clients to store only the desired number of pseudonyms based on application logic, instead of the $\delta$ granularity of slots. 

\begin{figure}[t]
  \centering
    \centerline{\includegraphics[width=\columnwidth]{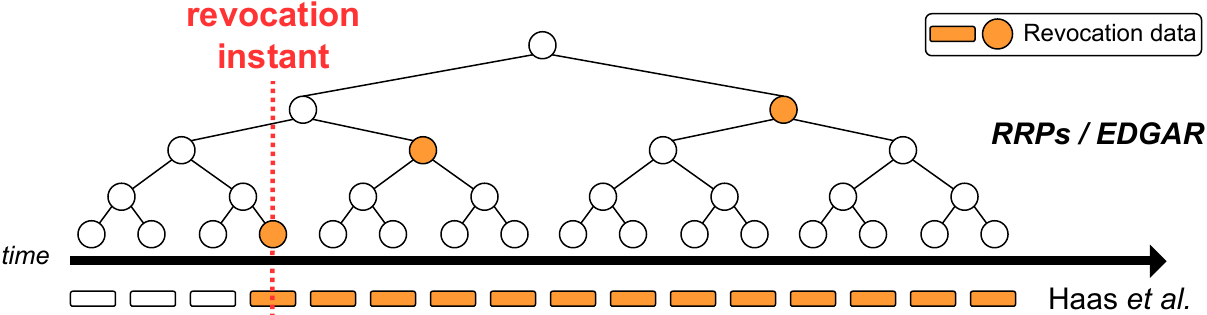}}  
  \caption{Amount of revocation data: \theedgesystem vs~\cite{haas2011efficient}.}
  \label{fig:Comparison}
\end{figure}

\fakedescription{Space Efficiency:} Both edge computing and TEEs have memory constraints, making space efficiency a crucial aspect~\cite{correia2020omega}. With \theedgesystem, revoking a client only requires a logarithmic number of latchkeys, while previous solutions require a linear amount of revocation information relative to the number of time slots (see Figure~\ref{fig:Comparison}).

\fakedescription{Backward Unlinkability:} \theedgesystem revokes future capabilities while ensuring that these capabilities cannot be linked to capabilities used in the past. Unlike previous work, the $\delta$ granularity of the time slots can be arbitrarily reduced without imposing a burden on the system: the number of pseudonyms used by a client does not depend on $\delta$ and the cost of revocation is $\log_2(1/\delta)$.

\fakedescription{Support for Distributed Fault-tolerance:} \theedgesystem distributes and replicates the PM functionality. This increases both availability and resilience. It increases availability because clients can obtain pseudonyms from any correct PM. It increases resiliency, because the coordination required to change epoch effectively prevents PMs that have been isolated or whose clock has been attacked from providing new pseudonyms to revoked clients in future epochs (if the clock is moved backward in time and the server generates invalid pseudonyms for old epochs; if the clock is moved forward in time, the server cannot progress through quarantine).

\fakedescription{Traceability and Accountability:} If required, EDGAR can be extended with mechanisms in which verifiers share (limited) information with the PM to provide traceability and accountability. Specifically, a verifier may present one or more used capabilities to the administrator and ask to revoke or trace the anonymous client that is responsible for such capabilities. Depending on the application and the facts to justify the request, the administrator may agree and forward the capabilities to the trusted central PM. Since the PM can generate all the public keys associated with the pseudonyms it has provided, it can subsequently match the used capabilities with the clients identifiers (note that only the trusted PM can perform this operation; this does not conflict with ensuring unlikability, which aims at preventing non-trusted entities, such as verifiers, from achieving the same goal). However, we have not implemented or evaluated such extensions as part of this work.


\section{Evaluation}
\label{sec:Evaluation}

We evaluate the power of \theabstractionshort, using a prototype of \theedgesystem. We compare the space efficiency of \theedgesystem against a state-of-the-art scheme for BU in the PKI setting. We also show that our scheme offers a latency suitable for edge applications. Finally, we evaluate \theedgesystem's throughput when serving pseudonyms. The source code is available at \url{https://github.com/claudio-correia/RRP-EDGAR}.

We have implemented both a verifier and a PM server on an Intel NUC10i7FNB. An Intel NUC is an example of what a fog node might be, as it possesses modest computational resources but is relatively inexpensive for large-scale deployments. It has an Intel i7-10710U CPU with Intel SGX, 16GB RAM, and Ubuntu 20.04 LTS. We run the Intel SGX SDK Linux 2.13 Release, Intel SSL-SGX~\cite{intelsgxssl} version Linux 2.14\_1.1.1k and OpenSSL 1.1.1k. We used a real-world data set composed of multiple vehicle trajectories in the city of Porto~\cite{dataset_porto}.

\subsection{Space Efficiency}

We have experimentally compared \theedgesystem with Haas~\textit{et al.}, as both support backward unlinkability by dividing the epoch in time intervals in the PKI setting. Haas~\textit{et al.} scheme was more recently implemented in the SCMS POC pilot~\cite{camp2016security} under the name of linkage values technique. The comparison is not trivial since the pseudonyms in Haas \textit{et al.} are locked to a time slot, being invalid if used in any other, while in \thesystem the pseudonyms are free to be used at any moment of an epoch.

For a clear comparison, we test both mechanisms in a real-world use case of a taxi company operating in the city of Porto, using a dataset of taxi trajectories~\cite{dataset_porto}. We choose the mix zones strategy~\cite{mixzone,freudiger2007mix,petit2014pseudonym} for pseudonym changes, i.e., taxis change pseudonyms at crossroads. This use case requires a large number of pseudonyms due to constant vehicle movement, favoring the Haas \textit{et al.} design. In scenarios with fewer pseudonyms needed over the same period, \theedgesystem will outperform Haas \textit{et al.} by even larger margins.

\begin{figure}
  \centering
    \centerline{\includegraphics[width=\columnwidth]{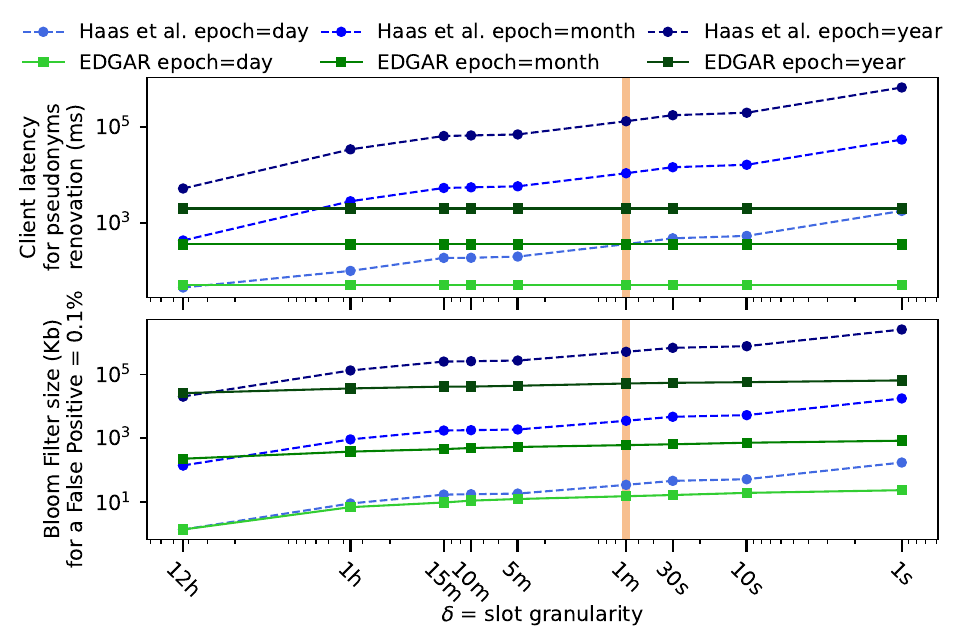}}
  \caption{Storage and latency in Haas \textit{et al.} Vs \theedgesystem.}
  \label{fig:haas_comp}
\end{figure}

Figure~\ref{fig:haas_comp} shows the results obtained in our experiment. The top part of the figure presents the user latency experienced when acquiring all pseudonyms for a specific epoch, and the bottom part presents the required Bloom filter size for each solution. 

In the dataset, we observed that some taxis drive long distances, requiring $692$ pseudonyms per day (which entails $5,677$ per moth and $32,142$ per year). Since \theedgesystem clients can use pseudonyms freely, these values were used as $I$, the number of pseudonyms instances in an epoch. On the other hand, Haas \textit{et al.} must fix the number of pseudonyms in each $\delta$ interval, e.g., with $\delta = 1$ min there was a taxi driver who crossed 12 intersections, forcing to fix 12 different pseudonyms for each $\delta$ interval, which has an explosive effect on the number of pseudonyms generated. Furthermore, as we tighten the $\delta$ interval, the difference between Haas \textit{et al.} and \theedgesystem is more pronounced, for $\delta = 1$ sec and epoch = 1 year we observe an improvement of $\approx 2.5*10^6$KB to $\approx 6.4*10^4$KB, two orders of magnitude lower in the required storage. This results from the logarithmic effect provided by \theabstractionshort, while Haas \textit{et al.} suffers from a linear effect. For large values of $\delta$, when an epoch is divided into a few time slots, Haas \textit{et al.} slightly outperforms \theedgesystem, since the overhead imposed by the latchkey hierarchy is no longer compensated by a significant reduction in pseudonyms.

Finally, we also observed that a large number of taxi trips take around $10$ min., so we believe that $\delta = 1$ min would be a reasonable configuration for this use case, representing a storage saving between 35KB and 15KB (for epoch = 1 day) and 508MB to 51MB (for epoch = 1 year) by implementing \theedgesystem instead of Haas \textit{et al.}, highlighted with the vertical orange line.

\subsection{$\boldsymbol{\delta}$ Granularity vs Latency Trade-off}
\label{sec:latnecy}

\begin{figure}[th]
  \centering
    \centerline{\includegraphics[width=\columnwidth]{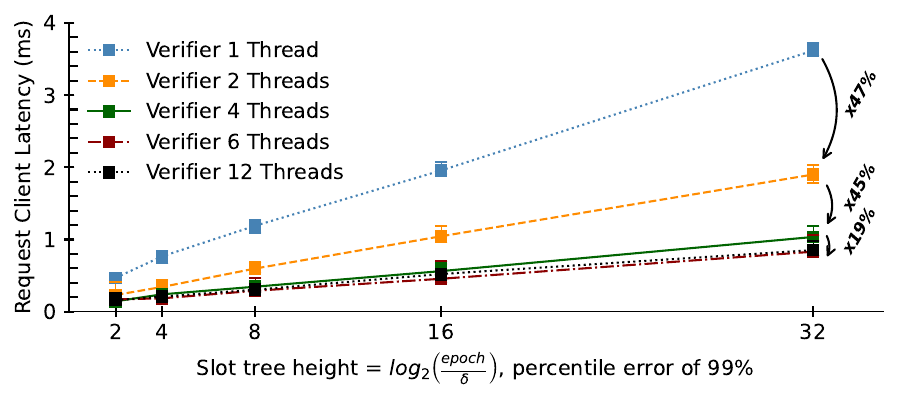}}
  \caption{Client latency for capability verification, for different levels of granularity and number of threads.}
  \label{fig:ACmultihread}
\end{figure}

The use of the latchkey hierarchy makes revocation of a range of time slots efficient, because a single latchkey can be used to revoke many capabilities. The efficiency of revocation comes at the cost of penalty in the authentication procedure, because the verifier must check a number of latchkeys equal to the tree height (instead of verifying only the leaf latchkey). Fortunately, the height of the latchkey tree grows only logarithmically, and latchkeys can be verified in parallel. Therefore, the penalty of \theabstractionshort on latency is small.  Figure~\ref{fig:ACmultihread} shows the latency of the verification procedure as the granularity of the time slots increases.  For instance, a system that uses $\delta = 1$ min  and an epoch of one day requires a binary latchkey tree of depth 11; if the epoch is increased to a month, the depth of the binary tree increases to 16.  In such a case, clients will incur a latency below $2ms$ in a single-thread verifier, still an acceptable latency for edge applications. Note that a tree with 32 levels can support an extreme large number of time slots, such as the ones resulting from using $\delta= 1$ sec and epoch = 70 years; even in this case, clients would experience only $3.5$ ms of latency. We have used multithreading to parallelize the verification of latchkeys, reducing the impact on latency. We observe a latency reduction that is linear with the number of cores of 47\% and 45\% from 1 to 2 threads and from 2 to 4 threads, respectively. Hyperthreading, from 6 to 12 threads, offers no improvement since most of the time is spent in cryptographic operations, and each core has a single ALU.

\subsection{PM Server Throughput} 
\label{sec:isThroughput}

We use the Ed25519 scheme~\cite{bernstein2012high} to obtain deterministic digital signatures, but is not yet available in the SGX SDK. Since the PM is responsible for generating the pseudonyms and runs inside the enclave, we implemented two different versions of the Ed25519 inside the enclave: a portable one~\cite{Ed25519Portable} and one based on OpenSSL~\cite{intelsgxssl}.

The portable version is straightforward to implement in any type of \ac{TEE}, but the lack of optimization affects its performance. In the second implementation, we use the Intel SSL-SGX library to import the OpenSSL library into the enclave. This library was designed for the SGX enclaves, being the most efficient implementation of the scheme. We also evaluated the system with and without SGX, using the OpenSSL library outside the enclave as well.

\begin{figure}[th]
\centering

   \includegraphics[scale=0.49]{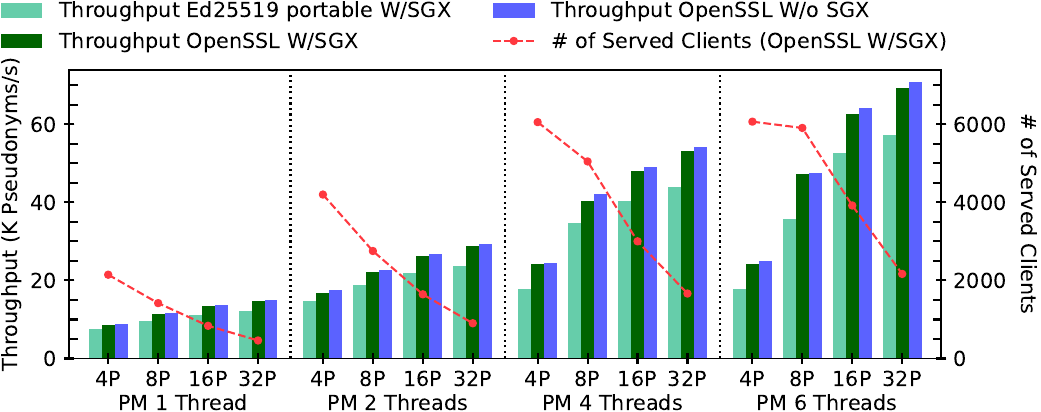}
   \captionof{figure}{Pseudonym throughput and number of served clients by the PM w/ and w/o SGX. $P$ is the number of pseudonyms generated in each request.}
    \label{fig:PM_get_pseudonyms} 

\end{figure}

Figure~\ref{fig:PM_get_pseudonyms} presents the pseudonym throughput for each of these implementations. We vary the number of threads on the PM node side, improving throughput as expected. Additionally, we vary the number of pseudonyms that each client requests from the PM, from 4, 8, 16, and 32 new pseudonyms. We observe that increasing the requested pseudonyms also increases throughput by reducing other system overheads, such as signing and encrypting. On the contrary, the number of served clients decreases since the PM server requires more time for each request. Following the previous discussion and fixing $\delta=1$ min, a taxi could request $32$ new pseudonyms in just $133$ ms, useful for at least the next $32$ min.


\section{Conclusion}
\label{sec:conclusion}

We presented \theabstraction, an abstraction that supports an anonymous authentication scheme based on pseudonyms that is able to enforce backward unlinkability with storage costs that are multiple orders of magnitude lower than those of the related work. The gains derive from our novel technique to decouple pseudonyms from time slots, and authentication/revocation procedures based on the use of latchkeys that can be generated from a given pseudonym for any desired time slot. This technique prevents clients from having to store a large number of unnecessary pseudonyms. As a proof of concept, we have designed and implemented a prototype of \theedgesystem, an authentication system for the edge based on the use of \theabstractionshort. We have used this prototype to perform an experimental evaluation using a real dataset of vehicle traces. The results show that \theedgesystem is capable of offering low latency and storage savings when serving clients. We motivated our work using a VANET scenario, as one of the most prominent use cases for anonymous authentication at the edge computing environment. Still, many other applications, such as crowdsensing, supply chain tracking, augmented reality, etc., also require anonymity and may benefit from our scheme.


\begin{acks}
\small
We thank the anonymous reviewers, Fred B. Schneider, and Carlos Ribeiro for their valuable comments that greatly helped to improve this paper. 

This work was supported by the FCT scholarship 2020.05270.BD, by national funds through Funda\c{c}\~ao para a Ci\^encia e a Tecnologia (FCT) via the INESC-ID grant UIDB/50021/2020 and via the SmartRetail project (ref. C6632206063-00466847) financed by IAPMEI, and by the European Union ACES project, 101093126.

\end{acks}

\bibliographystyle{ACM-Reference-Format}
\bibliography{bibliography.bib}

\appendix

\section{ERCSet Creation Algorithm}
\label{sec:createERCSet}

We now present in Algorithm~\ref{cap_resolution} the pseudo-code for the algorithm used to implement the \textsc{createERCSet} function, previously described using natural language in Section~\ref{sec:Overview}. The algorithm receives multiple capabilities to be revoked (from a given pseudonym). It first merges all the latchkeys from the set of provided capabilities (function \textsc{mergeLatchkeys}). Then, the function \textsc{removeUnsafe} will search and remove any latchkeys that may cover time slots that have not been revoked, to ensure that any authentication information that may be used outside the revocation interval, represented by the provided capabilities, is not present in the ERCSet. Afterward, the function \textsc{removeRedundant} will remove any redundant latchkeys, i.e. any latchkey children, that are already covered by their parent. This step is only to achieve efficient storage since it takes advance of our latchkey tree and only requires a logarithmic number of latchkeys to cover the revoked time slots. Finally, the function \textsc{latchkeysEncoding} takes the remaining latchkeys and inserts each one in a Bloom filter that implements the ERCSet. Our prototype implements an optimized version of this algorithm (the code will be released as open source when the paper is accepted).

\begin{algorithm}[t]
\caption{ERCSet creation, trusted PM side}
\small
\SetAlgoLined
\SetAlgoNoLine%
\label{algorithm:ERCSet}
\SetFuncSty{textsc}
$ {\mathcal C} = \langle C^a, \ldots , C^b \rangle  $: capabilities to be revoked for a pseudonym\\
~\\
\SetKwFunction{FMain}{mergeLatchkeys}
\SetKwProg{Pn}{Function}{:}{}
\Pn{\FMain{${\mathcal C}$}}{
    $\textit{latchkeySet} \gets \emptyset$ // \textit{latchkeySet} is a genuine set (not a multiset) \\  
    \ForEach{ $c_i \in {\mathcal C}$}
    {
        $\textit{latchkeySet} \gets \textit{latchkeySet} \cup \textsc{extractLatchkeys}(c_{i}) $\\
    }
\KwRet \textit{latchkeySet}}
~\\
\SetKwFunction{FMain}{removeUnsafe}
\Pn{\FMain{${\mathcal L}$}}{
    $ \textit{Safe} = {\mathcal L}$\\
    \tcc{Remove latchkeys from non-revoked time slots}

    \ForEach{ $l^i \in {\mathcal L}$}
    {
        \tcc{Returns all children nodes/latchkeys of $l_{i}$}
        $ \textit{descendants}_i \gets \textsc{getLatchkeysSubTree}(l_{i}) $\\   
        \If{ $ 	\exists x \in \textit{descendants}_i \land x \notin {\mathcal L} $ }
        {
            $ \textit{Safe}  \gets  \textit{Safe}  \setminus l_i $\\
        }
    } 
\KwRet \textit{Safe} }
~\\
\SetKwFunction{FMain}{removeRedundant}
\Pn{\FMain{${\mathcal L}$}}{
    $ \textit{NonRedundantSet}  = {\mathcal L}$\\
    \tcc{Remove latchkeys that are covered by parent} 
    \ForEach{ $l_i \in {\mathcal L}$}
    {
        $ \textit{parent}_i \gets \textsc{getParentLatchkey}(l_i) $\\
        \If{ $ \textit{parent}_i \in  {\mathcal L} $ }
        {
            $ \textit{NonRedundantSet} \gets  \textit{NonRedundantSet} \setminus l_i $\\
        } 
    }
\KwRet \textit{NonRedundantSet} }
~\\
\SetKwFunction{FMain}{latchkeysEncoding}
\Pn{\FMain{${\mathcal L}$}}{

    $ \textit{ERCSet} \gets \textsc{createNewBf}()$\\
    \ForEach{ $l_i \in {\mathcal L}$}
    {
        $\textit{ERCSet}.\textsc{bfAdd}(l_{i}) $\\
    }
\KwRet \textit{ERCSet}}
~\\
\SetKwFunction{FMain}{createERCSet}
\SetKwProg{Pn}{Function}{:}{}
\Pn{\FMain{${\mathcal C}$}}{
\textcolor[HTML]{34A222}{\textbf{Enclave Zone Start}}\\
    $\textit{Unfiltered} \gets \textsc{mergeLatchkeys}({\mathcal C})$\\
    $\textit{Safe} \gets \textsc{removeUnsafe}(\textit{Unfiltered})$\\
    $\textit{SafeNonRedundant} \gets \textsc{removeRedundant}(\textit{Safe})$\\
    $\textit{ERCSet} \gets \textsc{latchkeysEncoding}(\textit{SafeNonRedundant})$\\
    \textcolor[HTML]{34A222}{\textbf{Enclave Zone End} }\\
\KwRet \textit{ERCSet}}
\label{cap_resolution}
\end{algorithm}

\section{Unlinkability Proof}
\label{sec:UnlinkabilityProof}

We now provide a proof that how our implementation of Range Revocable Pseudonyms (RRPs) ensures that the information used to revoke clients cannot be linked to the information used by clients when authenticating on non-revoked time-slots. For clarity of exposition and without loss of generality, the proof considers a single epoch, a single client $cid$ with a single pseudonym $p$, and a single Pseudonym Manager (PM).

For convenience we use three sets, \textit{MayHaveBeenUsed}, \textit{Safe} and ERCSet, that
represent different sets of information that are relevant for the correctness of the algorithm.
The set \textit{MayHaveBeenUsed} contains all latchkeys that the client may have provided when authenticating in non-revoked time-slots. 
The set \textit{Safe} contains all latchkeys that are used to perform revocation of a range of time-slots (that corresponds to the output of function \textsc{removeUnsafe} in Algorithm~\ref{cap_resolution}). 
The set ERCSet includes an encoded version of all latchkeys in the set \textit{Safe}, which results from applying a one-way function to each latchkey in \textit{Safe} (that corresponds to the output of function \textsc{latchkeysEncoding} in Algorithm~\ref{cap_resolution}). Note that the function \textsc{removeRedundant} is relevant to reduce the size of the \textit{Safe} but is not relevant to the proof (which remains valid, even if this layer of filtering is not applied).

\textit{MayHaveBeenUsed} captures the information that the client may provide to the verifier when performing authentication. The set ERCSet captures the information included in \textit{Safe}. We consider that \textit{MayHaveBeenUsed} and ERCSet are public in the sense that an adversary can be aware of their content. Only the PM has access to the set \textit{Safe}.

We begin by demonstrating that the sets \textit{MayHaveBeenUsed} and \textit{Safe} are disjoint, i.e., that $\textit{MayHaveBeenUsed} \cap \textit{Safe} = \emptyset$. Then, we show that encoding the latchkeys of \textit{Safe} in the set ERCSet achieves the desirable unlinkability.

\begin{assumption}
\label{assumption:asymmetric}
There is 
a secure deterministic digital signature scheme.
\end{assumption}

We assume the availability of a secure deterministic digital signature scheme, such as Ed25519~\cite{bernstein2012high}, which ensures the usual authentication, integrity, and non-repudiation properties. This scheme supports $\textsc{DetKeyGen}(\textit{seed})$ that creates a key pair $(K^-, K^+)$ from any given seed. Signatures are also deterministic, generated by \textsc{DetSign()} using $K^-$, and verified by \textsc{VerSign()} with $K^+$. Finally, it is not possible to forge a signature without $K^-$. 

\begin{assumption}
\label{assumption:function}
There is a secure one-way function. 
\end{assumption}

We assume the availability of a secure collision-resistant hash function \textit{Digest()}, such as SHA256, that is easy to compute on every input, but not possible to invert given the output.

\begin{assumption}
\label{assumption:cid}
Only the client and the PM can access $cid$.
\end{assumption}

We assume that both the client and the PM will securely store the secret $cid$ and never disclose it.

\begin{definition}
\label{definition:tree}
A time-slot tree is a $n$-ary tree data structure that represents the time period of an epoch divided in time slots.
\end{definition}

A time-slot tree (see Figure~\ref{fig:tree-v01}) is a tree data structure where the time period of an epoch is divided into smaller slots of size $\delta$; each time slot is represented as a leaf node in the tree, identified by a unique node identifier $e^x$. An inner node of the tree has $n$ children and represents the entire time interval of its children nodes;  the tree root node $e$ captures the entire epoch time interval.

\begin{definition}
\label{definition:latchkey}
A valid latchkey is a digital signature of a tree node.
\end{definition}

A latchkey $l_x$ represents the unique bond of a pseudonym $p$ to a node $e^x$ in the tree. This bond is implemented by a digital signature performed over the node identifier (cf.~Definition~\ref{definition:tree}), where $l_x = \textsc{DetSign}(K_{p}^-, e^x)$. By using the private key of the pseudonym ($K_{p}^-$), we enforce authentication, integrity, and non-repudiation of the latchkey (see Assumption~\ref{assumption:asymmetric}).

\begin{definition}
\label{definition:capability}
A valid capability $C_p$ is defined as:
\[ c = \langle K_{p}^+, sig_p,~l_{leaf},~\ldots~l_{root}\rangle \]
\end{definition}

\noindent
where $ K_{p}^+$ is the public key of the pseudonym $p$, and the signature $\textit{sig}_p$ is used to prove the pseudonym authenticity (cf.~Section~\ref{rrpimplementation}), being generated by the PM as $\textit{sig}_p = \textsc{DetSign}(K_{PM}^-, \textit{epoch}\mathbin\Vert K_{p}^+)$, $\mathbin\Vert$ represents the concatenation of both values. Finally, $l_{leaf},~\ldots~l_{root}$ is a sequence of valid latchkeys (from Definition~\ref{definition:latchkey}) associated with the nodes in a path from a leaf node to the root node of the time-slot tree. The path is required to perform pseudonym revocation efficiently. Each time-slot $s$ is associated with a different capability, which is uniquely defined by the path from the root node of the time-slot tree to the leaf node associated to time-slot $s$.

\begin{definition}
\label{definition:NRCSet}
The Non-Revoked Capability Set is the set of capabilities for all non-revoked time-slots. 
\end{definition}

Note that the Non-Revoked Capability Set is never used in the algorithm (in fact, a client uses each pseudonym a single time, and never for multiple time-slots). The Non-Revoked Capability Set is an abstract artifact that is useful for the proof; it represents all capabilities that \underline{may} be used by a client in non-revoked time lots.

\begin{definition}
\label{definition:sul}
We define \textit{MayHaveBeenUsed} as the union of all the latchkeys in all capabilities from the Non-Revoked Capability Set.
\end{definition}

Like the Non-Revoked Capability Set, \textit{MayHaveBeenUsed} is never explicitly used in the algorithm and can never be collected by verifiers because, as noted before, a correct client only uses each pseudonym once (for a single time-slot). The set \textit{MayHaveBeenUsed} captures all latchkeys that \underline{may} be exposed to verifiers by the client when authenticating in non-revoked time lots.

\begin{definition}
\label{definition:RCSet}
The Revoked Capability Set  is the set of all capabilities for all revoked time-slots.
\end{definition}

The Revoked Capability Set is provided as input for Algorithm~\ref{cap_resolution}, that constructs an ERCSet that is subsequently used for revocation.


\begin{definition}
\label{definition:usrl}
The Unfiltered Revoked Latchkey Set,  or simply \textit{Unfiltered}, is the union of the latchkeys in all capabilities in the Revoked Capability Set.
\end{definition}

The  \textit{Unfiltered} set is obtained by applying the  \textsc{mergeMatchkeys} function to the Revoked Capability Set in Algorithm~\ref{algorithm:ERCSet}.

\begin{definition}
\label{definition:filter}
The \textit{Safe} set is is constructed by applying the \textsc{removeUnsafe} function from Algorithm~\ref{algorithm:ERCSet} to the \textit{Unfiltered} set.
\end{definition}

\begin{definition}
\label{definition:encoded}
The ERCSet is constructed by encoding all the latchkeys in the \textit{Safe} set with one or more one-way hash functions.
\end{definition}

In our implementation we use Bloom filters to implement the ERCSet. The one-way hash functions are implemented using \textit{Digest($~l_x$)} from Assumption~\ref{assumption:function}.

\begin{lemma}
\label{lemma:disjoint}
\textit{MayHaveBeenUsed} and \textit{Safe} are disjoint.
\end{lemma}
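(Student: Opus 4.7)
The central object is the time-slot tree. The plan is to show that membership of a latchkey $l_x$ in \textit{MayHaveBeenUsed} and in \textit{Safe} correspond to mutually exclusive conditions on the subtree rooted at the tree node $e^x$. First I would establish the structural lemma underlying the whole argument: by Definition~\ref{definition:capability}, the capability for a time-slot $s$ consists exactly of the latchkeys along the root-to-leaf path to the leaf $e^s$; equivalently, a latchkey $l_x$ appears in the capability for $s$ if and only if $e^s$ lies in the subtree rooted at $e^x$. This is a purely combinatorial fact about the tree and will be used repeatedly.

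From this lemma I would derive two characterizations. Combining with Definitions~\ref{definition:NRCSet} and \ref{definition:sul}, $l_x \in \textit{MayHaveBeenUsed}$ iff the subtree rooted at $e^x$ contains \emph{at least one non-revoked leaf}. Dually, by Definitions~\ref{definition:RCSet} and~\ref{definition:usrl}, $l_x \in \textit{Unfiltered}$ iff that same subtree contains \emph{at least one revoked leaf}. In particular a \emph{leaf} latchkey $l_s$ belongs to \textit{Unfiltered} iff the slot $s$ is revoked.

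Next I would analyse \textsc{removeUnsafe}. Reading Algorithm~\ref{cap_resolution}, a latchkey $l_x \in \textit{Unfiltered}$ survives into \textit{Safe} iff every element of $\textsc{getLatchkeysSubTree}(l_x)$ is also in \textit{Unfiltered}. Restricting this condition to the leaf descendants of $e^x$ and applying the leaf characterization above, this means that \emph{every} time-slot in the subtree rooted at $e^x$ is revoked. (The converse direction, that if all leaves below $e^x$ are revoked then all descendant latchkeys are in \textit{Unfiltered}, is immediate from the structural lemma, so the two conditions coincide; only one direction is needed for the disjointness claim.)

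The conclusion is then immediate: if $l_x \in \textit{Safe}$, every leaf below $e^x$ is revoked, so by the characterization of \textit{MayHaveBeenUsed} we have $l_x \notin \textit{MayHaveBeenUsed}$, which gives $\textit{MayHaveBeenUsed} \cap \textit{Safe} = \emptyset$. The main obstacle I anticipate is bookkeeping rather than depth: one must be precise about the quantifier in \textsc{removeUnsafe} (``some descendant missing'' triggers removal, so ``all descendants present'' survives), and one must remember that a node counts as its own descendant only if the algorithm treats it so; either way, pushing the condition down to the \emph{leaves} of the subtree is what makes the argument clean, because leaf latchkeys are in bijection with time-slots and directly witness revoked versus non-revoked status.
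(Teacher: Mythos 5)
Your proposal is correct and is essentially the paper's own argument: both hinge on the observation that a latchkey survives \textsc{removeUnsafe} only if every leaf descendant is in \textit{Unfiltered} (hence every slot in its subtree is revoked), while membership in \textit{MayHaveBeenUsed} requires some non-revoked leaf in that same subtree, so a common element would force a slot to be both revoked and non-revoked. The paper phrases this as a proof by contradiction via a witness leaf $l_s$, whereas you phrase it directly through subtree characterizations, but the substance is identical.
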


\begin{proof} We now show that $\textit{MayHaveBeenUsed} \cap \textit{Safe} = \emptyset$ by contradiction:
The same latchkey being in both sets can be written as: $\exists x: x \in \textit{MayHaveBeenUsed} \cap \textit{Safe}$. 

Let us assume that the latchkey $x$ belongs to \textit{MayHaveBeenUsed} then, from Definition~\ref{definition:NRCSet}, there is a capability $c_s$, of a non-revoked time slot $s$, that contains $x$. Let $l_s$ be the latchkey associated with the leaf node for time slot $s$. Note also that if $x$ is associated with an inner node of the time-slot tree, the leaf node for time slot $s$ is a descendant of that inner node, because a capability includes all nodes in the path from the root to the leaf node, and nodes in the path have a parent-child relation. 

Let us assume that the latchkey $x$ belongs to \textit{Safe}. Then $l_s$ must belong to \textit{Unfiltered}, otherwise, because $l_s$ descends from $x$, function \textsc{removeUnsafe} in Algorithm~\ref{algorithm:ERCSet} would remove $x$ from \textit{Safe} if $l_s \not\in \textit{Unfiltered}$.

Thus, if $\exists x: x \in \textit{MayHaveBeenUsed} \cap \textit{Safe}$ then $l_s$ must belong both to \textit{MayHaveBeenUsed} and to \textit{Unfiltered}. Because $l_s$ is associated with a leaft node, this means that there is a time slot that is both revoked and non-revoked, a contradiction.
\end{proof}

\begin{lemma}
\label{lemma:latchkeys}
Only the client and the PM are capable of generating a valid capability $C_p$ and the respective latchkeys.
\end{lemma}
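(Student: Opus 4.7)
The plan is to decompose a valid capability into its two certified components, the PM-produced signature $\textit{sig}_p$ and the sequence of latchkeys $l_{\textit{leaf}},\ldots,l_{\textit{root}}$, and then argue separately that each of these components can only be produced by an entity that holds the relevant private key. The conclusion follows by showing that only the PM and the client to whom the pseudonym was issued are in possession of those keys.

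First I would handle the pseudonym-binding signature. By Definition~\ref{definition:capability}, any valid capability $C_p$ must include $\textit{sig}_p = \textsc{DetSign}(K_{PM}^-, \textit{epoch} \mathbin\Vert K_p^+)$, and by Definition~\ref{definition:latchkey} must include latchkeys of the form $l_x = \textsc{DetSign}(K_p^-, e^x)$, verifiable against $K_p^+$. By Assumption~\ref{assumption:asymmetric}, such signatures cannot be forged without the corresponding private key. Hence $\textit{sig}_p$ can only be produced by an entity holding $K_{PM}^-$, which by the trust assumptions of Section~\ref{rrpimplementation} resides exclusively inside the PM enclave.

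Next I would show that the latchkeys in $C_p$ can only be produced by the client or the PM. Recall from Section~\ref{rrpimplementation} that $(K_p^-,K_p^+)$ is obtained deterministically as $\textsc{DetKeyGen}(\langle \textit{cid}, \textit{epoch}, i\rangle)$. By Assumption~\ref{assumption:cid}, only the client and the PM know $\textit{cid}$, so only they can recompute $K_p^-$; any other party is reduced to forging a $K_p^-$-signature without the key, which contradicts Assumption~\ref{assumption:asymmetric}. Thus valid latchkeys associated with $K_p^+$ can only be produced by the client or the PM.

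Finally I would wrap up with a short case analysis on an adversary $\mathcal{A}$ that is neither the client nor the PM. Either $\mathcal{A}$ reuses some $(K_p^+,\textit{sig}_p)$ previously emitted by the PM, in which case it still cannot produce valid latchkeys for that $K_p^+$ (by the previous paragraph), or $\mathcal{A}$ invents a fresh key pair $(\tilde K^-,\tilde K^+)$, in which case it has to forge a PM-signature over $\textit{epoch}\mathbin\Vert \tilde K^+$, which is precluded by Assumption~\ref{assumption:asymmetric}. Either way, $\mathcal{A}$ cannot assemble all components of a valid capability, proving the lemma. The only subtle step is the second case, where one must be careful to note that the determinism of $\textsc{DetKeyGen}$ does not help the adversary, since without $\textit{cid}$ the preimage space is effectively unbounded; I would appeal to the one-wayness implicit in Assumption~\ref{assumption:asymmetric} to rule out recovering $\textit{cid}$ from any observed $K_p^+$.
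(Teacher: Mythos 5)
Your proof is correct and follows essentially the same route as the paper's: a valid latchkey requires $K_p^-$, which is derived via \textsc{DetKeyGen} from the secret \textit{cid} (Assumption~\ref{assumption:cid}), and unforgeability (Assumption~\ref{assumption:asymmetric}) rules out anyone else producing it. You are in fact slightly more complete than the paper, which argues only the latchkey component and leaves the $\textit{sig}_p$ forgery case and the fresh-key-pair case implicit; those additions are welcome but do not change the approach.
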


\begin{proof} From Definition~\ref{definition:capability}, capability $C_p$ includes a set of latchkeys. From Definition~\ref{definition:latchkey}, to construct each latchkey, access to the pseudonym private key $K_{p}^-$ is required. From Assumption~\ref{assumption:asymmetric}, the key pair $(K_{p}^-, K_{p}^+)$ for the pseudonym $p$ that the client holds is generated using the $cid$ as seed in the asymmetric scheme. From Assumption~\ref{assumption:cid} only the client and PM can access the $cid$ and, therefore, only the client and the PM can generate $K_{p}^-$. Therefore, only the client and the PM can generate valid latchkeys for the capability $C_p$.
\end{proof}

\begin{lemma}
\label{lemma:extract}
An adversary can only access latchkeys from \textit{MayHaveBeenUsed}.
\end{lemma}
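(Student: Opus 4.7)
The plan is to enumerate every channel through which an adversary could possibly obtain a latchkey and to argue that each channel either yields nothing or yields a latchkey already contained in \textit{MayHaveBeenUsed}. I identify three plausible channels: (a) the adversary manufactures a latchkey on its own; (b) the adversary intercepts latchkeys transmitted by a correct client during authentication; (c) the adversary tries to recover a latchkey from the ERCSet published by the PM. No other channel is available, because the PM is trusted and the trusted code runs inside the TEE, and because by Assumption~\ref{assumption:cid} $cid$ is known only to the client and the PM.

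For channel (a), Lemma~\ref{lemma:latchkeys} rules out forgery directly: producing a valid latchkey requires the pseudonym's private key $K_{p}^-$, whose derivation depends on $cid$, so by Assumptions~\ref{assumption:asymmetric} and~\ref{assumption:cid} the adversary cannot synthesize latchkeys. For channel (b), I would appeal to the protocol: a correct client releases the latchkeys of a capability only when it authenticates in a slot for which its pseudonym has not been revoked; hence every latchkey observable on the wire belongs to some capability in the Non-Revoked Capability Set, which by Definitions~\ref{definition:NRCSet} and~\ref{definition:sul} places it inside \textit{MayHaveBeenUsed}. For channel (c), Definition~\ref{definition:encoded} states that the ERCSet is the image of the \textit{Safe} set under the one-way function of Assumption~\ref{assumption:function}; membership in the ERCSet can be tested for a candidate latchkey that the adversary already possesses, but no latchkey can be extracted from it.

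Combining the three cases, any latchkey to which the adversary has access must have been observed via channel (b) and therefore lies in \textit{MayHaveBeenUsed}, which is exactly the statement of the lemma. I would close the proof with this union argument.

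The main obstacle will be arguing the completeness of the enumeration, rather than any single case. In particular one must be explicit that the untrusted portion of the PM code never handles raw latchkeys, because all sensitive operations — key derivation, signing, and insertion into the bloom filter — are executed inside the TEE (as reflected in the enclave-zone markers of Algorithm~\ref{cap_resolution}); only the already-encoded ERCSet leaves the enclave. Once this boundary is made explicit and side channels against the TEE are declared out of scope (as in Section~\ref{sec:ThreatModel}), the three channels above are exhaustive and the lemma follows.
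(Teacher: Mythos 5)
Your proof is correct and follows essentially the same route as the paper's: the paper's own argument likewise observes that authentication in non-revoked slots exposes only latchkeys in \textit{MayHaveBeenUsed}, that the ERCSet cannot be inverted by Assumption~\ref{assumption:function}, and (via the preceding Lemma~\ref{lemma:latchkeys}) that forgery is impossible. Your version merely makes the case enumeration and the TEE boundary more explicit, which the paper leaves implicit.
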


\begin{proof} \textit{MayHaveBeenUsed} contains all the  latchkeys that appear in non-revoked capabilities. An user, to authenticate in a non-revoked time-slot, must generate and present the corresponding capability to a verifier (that is non-trusted). Therefore, the attacker may have access to latchkeys in \textit{MayHaveBeenUsed}. On the other hand, by Definition~\ref{definition:encoded}, ERCSet is constructed by encoding each latchkey $l_x$ using $\textsc{Digest}(l_x)$, and by Assumption~\ref{assumption:function} the adversary is not capable of inverting any entry in ERCSet to extract a latchkey.
\end{proof}

\begin{theorem}
\label{theorem:unlinkability}
The information used to revoke clients cannot be linked to the information used by clients when authenticating on non-revoked time-slots.
\end{theorem}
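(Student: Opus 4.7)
The plan is to reduce the theorem to a statement about what an adversary can compute from the public artifacts and then invoke the three lemmas already established. The revocation information available to the adversary is exactly the ERCSet; the authentication information available to the adversary is the set of latchkeys observed by (possibly malicious) verifiers when capabilities are presented for non-revoked time slots. ``Linking'' these two sources means producing a correlation between an ERCSet entry and an observed capability. I will argue that no such correlation is computable under the stated cryptographic assumptions.

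First I would formalize the adversary's view as the pair $(\textit{ERCSet},\, \mathcal{O})$, where $\mathcal{O}$ is any subset of latchkeys drawn from the capabilities the client actually presented on non-revoked slots. By Definition~\ref{definition:sul}, $\mathcal{O} \subseteq \textit{MayHaveBeenUsed}$. Second, I would apply Lemma~\ref{lemma:extract} to conclude that the adversary cannot invert the one-way encoding to recover any latchkey from the ERCSet, so the only linking primitive available is a membership query of the form ``is $l_x \in \textit{ERCSet}$?''. Third, I would bound which latchkeys $l_x$ the adversary can even submit to such a query: by Lemma~\ref{lemma:latchkeys}, latchkeys require the pseudonym private key $K_p^-$ to forge, and by Assumption~\ref{assumption:cid} the adversary does not have $cid$ nor, consequently, $K_p^-$. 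Hence the only latchkeys the adversary can query are those it has observed, i.e., elements of $\mathcal{O} \subseteq \textit{MayHaveBeenUsed}$. Finally, Lemma~\ref{lemma:disjoint} gives $\textit{MayHaveBeenUsed} \cap \textit{Safe} = \emptyset$, and by Definition~\ref{definition:encoded} the ERCSet encodes exactly the latchkeys of \textit{Safe}. Therefore every membership query the adversary can legitimately pose returns negative, so the joint view $(\textit{ERCSet},\,\mathcal{O})$ is indistinguishable from the view in which the ERCSet had been generated from the revocation of an entirely unrelated pseudonym.

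The main obstacle I expect is the false-positive behavior of the Bloom filter realization of ERCSet: a membership query for an $l_x \in \textit{MayHaveBeenUsed}$ could return true by collision, seemingly yielding a link. I would address this by noting two points. First, Lemma~\ref{lemma:extract} and Assumption~\ref{assumption:function} are stated at the level of a generic one-way encoding, so the theorem holds unconditionally for encodings without false positives; the Bloom-filter instance is an engineering choice. Second, even under Bloom filters, a false positive is by definition independent of whether $l_x$ was actually inserted, so it carries no information about the client's past authentications; the parameter analysis in Section~\ref{rrpimplementation} bounds this probability and the $M$ extra pseudonyms mechanism absorbs its operational impact. Combining the three lemmas with this observation yields the theorem.
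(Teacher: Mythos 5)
Your proof is correct and follows essentially the same route as the paper's: it reduces linkability to the adversary simultaneously holding a latchkey in \textit{MayHaveBeenUsed} and one in \textit{Safe}, then invokes Lemma~\ref{lemma:extract} to confine the adversary's accessible latchkeys to \textit{MayHaveBeenUsed} and Lemma~\ref{lemma:disjoint} to conclude. Your two additional steps --- explicitly citing Lemma~\ref{lemma:latchkeys} to rule out the adversary forging latchkeys to query against the ERCSet, and observing that a Bloom-filter false positive is independent of insertion and hence leaks nothing --- are refinements that the paper's proof leaves implicit, and they strengthen rather than change the argument.
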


\begin{proof}
Revocation is performed using exclusively latchkeys from the \textit{Safe} set that are encoded using \textsc{Digest()}. To link the information provided during authentication with the information used for revocation, the adversary would need to have access to at least a latchkey $u \in \textit{MayHaveBeenUsed}$ and to a latchkey $r \in \textit{Safe}$. From Lemma~\ref{lemma:extract}, the adversary can only access the latchkeys in \textit{MayHaveBeenUsed}. From Lemma~\ref{lemma:disjoint}, \textit{MayHaveBeenUsed} and \textit{Safe} are disjoint. 
Therefore, the information used to revoke clients (latchkeys from the \textit{Safe} set) cannot be linked to the information used by clients when authenticating (the latchkeys in \textit{MayHaveBeenUsed}).
\end{proof}

From Theorem~\ref{theorem:unlinkability}, a client using a single pseudonym obtains unlinkability guarantees, since no revocation information can be linked with authorization information used outside the revocation interval. Next we show that RRPs also offer unlinkability when multiple pseudonyms are used.

\begin{theorem}
\label{theorem:capdiff}
Capabilities generated from different pseudonyms cannot be linked.
\end{theorem}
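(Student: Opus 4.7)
The plan is to reduce linkability of capabilities to the pseudorandomness of the deterministic key-pair generator, and then propagate that property through each component of a capability. First I would enumerate the constituents of a capability as listed in Definition~\ref{definition:capability}: the pseudonym public key $K_p^+$, the PM signature $sig_p$ on $\textit{epoch} \mathbin\Vert K_p^+$, and the sequence of latchkeys $l_{leaf}, \ldots, l_{root}$. The goal is to show that for two distinct pseudonyms $p$ and $p'$ (with distinct instance indices $i \ne i'$), no adversary can decide whether a pair of capabilities $(c, c')$ originates from the same $cid$ or from two different clients using only the information present in $c$ and $c'$.

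Next I would argue that the key pairs $(K_p^-, K_p^+)$ and $(K_{p'}^-, K_{p'}^+)$ are indistinguishable from independently sampled key pairs. They are produced by $\textsc{DetKeyGen}$ on seeds $\langle cid, epoch, i \rangle$ and $\langle cid, epoch, i' \rangle$ that differ in at least one coordinate, and by Assumption~\ref{assumption:cid} the secret $cid$ is unknown to the adversary. Since $\textsc{DetKeyGen}$ behaves as a pseudorandom generator seeded with $cid$, its outputs on distinct inputs are computationally unlinkable to the common seed component; hence $K_p^+$ and $K_{p'}^+$ reveal no joint information tying them to the same client.

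I would then lift this to the remaining capability fields. Each latchkey $l_x$ is of the form $\textsc{DetSign}(K_p^-, e^x)$, so by Assumption~\ref{assumption:asymmetric} it is a deterministic function of $K_p^-$ and a public node label, and therefore inherits the unlinkability of $K_p^-$: any efficient linker on latchkeys would yield an efficient linker on the underlying key pairs. The PM signature $sig_p$ is a function of $K_{PM}^-$, the public $epoch$, and $K_p^+$; since $K_{PM}^-$ is the same for all pseudonyms and $K_p^+$ is unlinkable, $sig_p$ cannot leak correlation either. Combining these yields that the joint view $(c, c')$ is computationally indistinguishable from capabilities drawn from two independent pseudonyms.

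The main obstacle I expect is formalizing the unlinkability property of $\textsc{DetKeyGen}$ precisely: the earlier argument just invokes ``standard properties of \textsc{DetKeyGen},'' but to make the claim rigorous I would need to state a seed-hiding or PRF-like assumption on $\textsc{DetKeyGen}$ and give a reduction from a linker on capabilities to a distinguisher against that assumption. A secondary subtlety is handling the fact that $sig_p$ is a \emph{deterministic} signature over $K_p^+$: I must be careful to argue that determinism does not introduce a covert channel linking $p$ and $p'$ beyond what is already revealed by $K_p^+$ itself, which follows because $sig_p$ is a function only of public inputs and the common key $K_{PM}^-$.
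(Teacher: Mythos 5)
Your proposal is correct and follows essentially the same route as the paper's proof: reduce every field of a capability (the latchkeys and $sig_p$ included) to a function of the pseudonym key pair, then invoke the fact that key pairs produced by \textsc{DetKeyGen} from seeds differing in the instance index $i$ cannot be linked back to the common seed component $\textit{cid}$. Your version is in fact more careful than the paper's, which simply asserts that unlinkability of the key pairs ``is guaranteed by the security of the key generation scheme,'' whereas you correctly identify that a precise seed-hiding or PRF-style assumption on \textsc{DetKeyGen} and an explicit reduction would be needed to make that step rigorous.
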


\begin{proof}
From Definition~\ref{definition:capability}, all the information in a capability depends on the asymmetric key pair associated with the pseudonym. From Assumption~\ref{assumption:function}, asymmetric key pairs for different pseudonyms are different because they are generated using different seeds (the unique instance number $i$ is part of the seed $\langle \textit{cid}, \textit{epoch}, \textit{i}\rangle$). Additionally, asymmetric key pairs cannot be linked to the seed used for generation: this is guaranteed by the security of the key generation scheme. 
\end{proof}


\section{RRP Protocol Workflow}
\label{sec:workflow}


\begin {figure*}
\centering
\begin{adjustbox}{width=\textwidth}

\begin{tikzpicture}
  \begin{umlseqdiag}
      \umlobject[x=0,fill=white,no ddots]{Client}
      \umlobject[x=5,fill=white,no ddots]{Verifier}
      \umlobject[x=14,fill=white,no ddots]{Pseudonym Manager}
      \umlobject[x=20.5,fill=white,no ddots]{Administrator}
  
      \begin{umlcall}[dt=5,padding=10,op=\textsc{requestRRP(\textit{cid})}, return={$\langle sig_{p1}, ..., sig_{pI} \rangle$}]{Client}{Pseudonym Manager}
      \end{umlcall}

        \begin{umlcallself}[dt=3,padding=8]{Client}
        \end{umlcallself}

        \begin{umlcallself}[dt=3,padding=4.5]{Client}
        \end{umlcallself}
        
        \begin{umlcall}[dt=0,padding=25,op={$\textsc{sendRequest}(req, sig_{req}, c)$}, return={accepted}]{Client}{Verifier}
        \end{umlcall}

      \begin{umlcall}[dt=47,padding=27,op={$\textsc{revoke}(cid, \langle s, ..., s^n \rangle)$}, return={return}]{Administrator}{Pseudonym Manager}
      \end{umlcall}
  
      \umlsdnode[dt=3]{Client}
      \umlsdnode[dt=3]{Verifier}
      \umlsdnode[dt=8]{Pseudonym Manager}
      \umlsdnode[dt=8]{Administrator}
  \end{umlseqdiag}
  
  \node[note,align = left] at (16.5,-1.5) {
    \small $\textit{epoch} = \textsc{currentEpoch}()$ \\
    \small for $i$ in $I$: \\ 
    \small $\mathindent \textit{seed} = \textsc{Digest}(\textit{cid, epoch}, i)$ \\
    \small $\mathindent K_{p}^{-},K_{p}^{+} = \textsc{DetKeyGen}(\textit{seed})$ \\
    \small $\mathindent sig_{p} = \textsc{DetSign}(K_{PM}^{-}, \langle \textit{epoch}, K_{p}^{+} \rangle)$
  };

  \draw[decorate, decoration = {brace, amplitude=5pt}] (19,-0.5) --  (19,-2.5);
  \node[note, align = left, rotate=90] at (19.5,-1.9) {\textsc{createRRP}$(cid, epoch)$};

  \node[note,align = left] at (2.396,-4) {
    \small $s = \textsc{currentSlot()}$ \\
    \small $\textit{seed} = \textsc{Digest}(\langle \textit{cid, epoch, i} \rangle)$ \\
    \small $K_{p}^{-}, K_{p}^{+} = \textsc{DetKeyGen}(seed)$ \\
    \small $set_{latchkeys\_labels} = \textsc{path}(\textit{s, epoch})$ \\
    \small for $e$ in $set_{latchkeys\_labels}$: \\
    \small $\mathindent l = \textsc{DetSign}(K_{p}^{-}, e)$ \\
    \small $c = \langle K_{p}^{+}, sig_p, l, ..., l_n \rangle$
  };

  \draw[decorate, decoration = {brace, amplitude=5pt}] (4.8,-2.6) --  (4.8,-5.4);
  \node[note, align = left, rotate=90] at (5.3,-4) {\textsc{getCapability}$(sig_{p}, s)$};

  \node[note,align = left] at (1.888,-6.4) {
    \small $req = ``open"$ \\
    \small $sig_{req} = \textsc{DetSign}(K_{p}^{-}, req)$ \\
  };

    \node[note,align = left] at (-0.0302,-7.4) {
    \small \\
  };

    \node[note,align = left] at (0.0305,-7.4) {
    \small \\
  };
  
  \node[note,align = left] at (7.4,-8.6) {
    \small $\textsc{VerSign}(K_{p}^{+}, req, sig_{req})$ \\
    \small $\textit{s} = \textsc{currentSlot}()$ \\
    \small $\textit{epoch} = \textsc{currentEpoch}()$ \\
    \small $\textsc{VerSign}(K_{PM}^{+}, \langle \textit{epoch}, K_{p}^{+} \rangle, sig_{p})$ \\
    \small $set_{latchkeys\_labels} = \textsc{path}(s, \textit{epoch})$ \\
    \small for $e$ in $set_{latchkeys\_labels}$: \\
    \small $\mathindent \textsc{VerSign}(K_{PM}^{+}, e, l)$
  };

  \draw[decorate, decoration = {brace, amplitude=5pt}] (10,-7.2) --  (10,-10);
  \node[note, align = left, rotate=90] at (10.5,-8.5) {\textsc{verifyCapability}$(c)$};

  \node[note,align = left] at (6.8,-10.9) {
    \small $set_{latchkeys} = \langle l, ..., l_n \rangle$ \\
    \small for $l$ in $set_{latchkeys}$: \\
    \small $\mathindent \textit{encoded} = \textsc{Digest}(l)$ \\
    \small $\mathindent \textsc{checkSet}(\textit{erc, encoded})$
  };

  \draw[decorate, decoration = {brace, amplitude=5pt}] (8.5,-10) --  (8.5,-11.8);
  \node[note, align = left, rotate=90] at (9,-10.8) {\textsc{isRevoked}$(\textit{erc}, c)$};

  \node[note,align = left] at (16.25,-8.3) {
    \small $erc_{tmp} = \emptyset$ \\
    \small for $i$ in $[ 1, I ]$: \\
    \small $\mathindent c = \emptyset$ \\
    \small $\mathindent \mathrm{for} \; s \; \mathrm{in} \; \langle s, ..., s^n \rangle$: \\
    \small $\mathindent \mathindent c = c \cup \textsc{getCapability}(\_, s)$ \\
    \small $\mathindent \mathrm{\# defined \; at \; Algorithm \; 1}$ \\
    \small $\mathindent erc_{tmp} = \textsc{createERCSet}(c)$ \\

    \small $\mathindent erc = erc_{tmp} \cup erc$
  };

  \draw[decorate, decoration = {brace, amplitude=5pt}] (18,-9.5) --  (18,-10);
  \node[note, align = left, rotate=90] at (18.7,-8.87) {\textsc{mergeERCSet}$(erc_{tmp}, erc)$};

\end{tikzpicture}

\end{adjustbox}

\caption{Protocol workflow for RRPs and pseudocode for each operations.}
\label{fig:workflow}

\end{figure*}
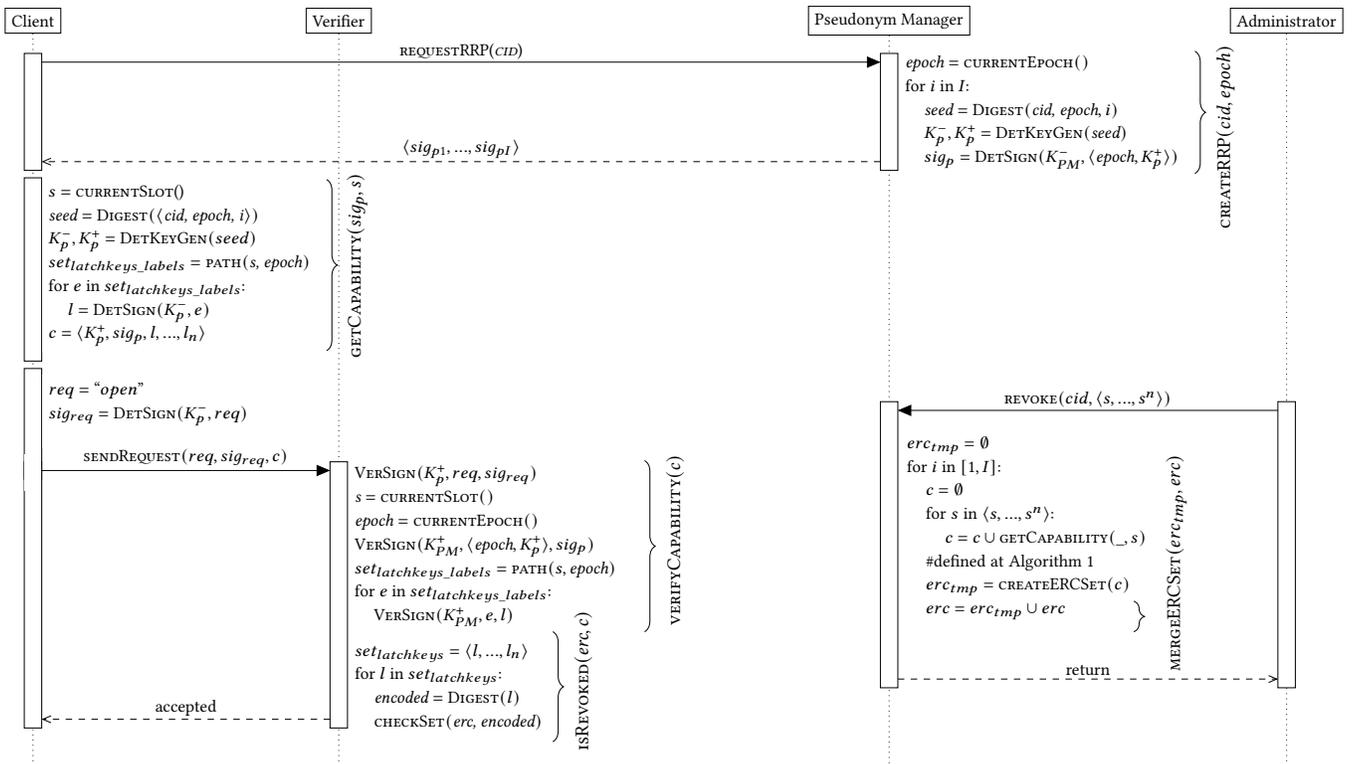

We now present in Figure~\ref{fig:workflow} our protocol and workflow in detail for each operation presented in Section~\ref{sec:rrps}, and how each entity interacts in the system.

\end{document}